\newtheorem{Theorem}{Theorem}[section]
\newtheorem{Lemma}[Theorem]{Lemma}
\newtheorem{Proposition}[Theorem]{Proposition}
 { \theoremstyle{definition}
\newtheorem{Remark}[Theorem]{Remark} }
  \def\cF{\mathcal{F}}
  \def\cI{\mathcal{I}}
  \def\cU{\mathcal{U}}
 \def\cW{\mathcal{W}}
\newcommand{\QQ}{{\mathbb Q}}
\newcommand{\ZZ}{{\mathbb Z}}
\newcommand{\wh}[1]{\widehat{#1}}
\newcommand{\sfrac}[2]{{\textstyle{\frac{#1}{#2}}}}
\newcommand{\gelpn}{{\mathcal{A}_{q,p}\big(\widehat{\mathfrak{gl}}(N)_{c}\big)}}
\newcommand{\gelp}{{\mathcal{A}_{q,p}\big(\widehat{\mathfrak{gl}}(2)_{c}\big)}}
\newcommand{\ellipt}[1]{\mbox{\AA${}_{q,p,c}\big(\wh{\mathfrak{gl}}_{#1}\big)$}}
\numberwithin{equation}{section}
\begin{document}
\allowdisplaybreaks

\newcommand{\arXivNumber}{2005.03579}

\renewcommand{\thefootnote}{}

\renewcommand{\PaperNumber}{094}

\FirstPageHeading

\ShortArticleName{On Abelianity Lines in Elliptic $W$-Algebras}

\ArticleName{On Abelianity Lines in Elliptic $\boldsymbol{W}$-Algebras\footnote{This paper is a~contribution to the Special Issue on Elliptic Integrable Systems, Special Functions and Quantum Field Theory. The full collection is available at \href{https://www.emis.de/journals/SIGMA/elliptic-integrable-systems.html}{https://www.emis.de/journals/SIGMA/elliptic-integrable-systems.html}}}

\Author{Jean AVAN~$^\dag$, Luc FRAPPAT~$^\ddag$ and Eric RAGOUCY~$^\ddag$}

\AuthorNameForHeading{J.~Avan, L.~Frappat and E.~Ragoucy}

\Address{$^\dag$~Laboratoire de Physique Th\'eorique et Mod\'elisation,
CY Cergy Paris Universit\'e, \\
\hphantom{$^\dag$}~CNRS, F-95302 Cergy-Pontoise, France}
\EmailD{\href{mailto:avan@u-cergy.fr}{avan@u-cergy.fr}}

\Address{$^\ddag$ Laboratoire d'Annecy-le-Vieux de Physique Th{\'e}orique LAPTh, Universit\'e Grenoble Alpes,\\
\hphantom{$^\ddag$}~USMB, CNRS, F-74000 Annecy, France}
\EmailD{\href{mailto:luc.frappat@lapth.cnrs.fr}{luc.frappat@lapth.cnrs.fr}, \href{mailto:eric.ragoucy@lapth.cnrs.fr}{eric.ragoucy@lapth.cnrs.fr}}

\ArticleDates{Received May 08, 2020, in final form September 22, 2020; Published online September 30, 2020}

\Abstract{We present a systematic derivation of the abelianity conditions for the $q$-deformed $W$-algebras constructed from the elliptic quantum algebra $\mathcal{A}_{q,p}\big(\widehat{\mathfrak{gl}}(N)_{c}\big)$. We identify two sets of conditions on a given critical surface yielding abelianity lines in the moduli space ($p, q, c$). Each line is identified as an intersection of a countable number of critical surfaces obeying diophantine consistency conditions. The corresponding Poisson brackets structures are then computed for which some universal features are described.}

\Keywords{elliptic quantum algebras; $W$-algebras}

\Classification{17B37; 17B68}

\renewcommand{\thefootnote}{\arabic{footnote}}
\setcounter{footnote}{0}

\section{Introduction}\label{sect:intro}
The construction of deformed $W_N$ algebras as subalgebras of the elliptic quantum algebra $\mathcal{A}_{q,p}\big(\widehat{\mathfrak{gl}}(N)_{c}
\big)$ was proposed in \cite{AFR19}. The construction uses
as generating functionals quadratic and higher rank traces of the quantum Lax operators defining
$\mathcal{A}_{q,p}\big(\widehat{\mathfrak{gl}}(N)_{c}\big)$.
The existence of such closed subalgebras of the enveloping algebra of $\mathcal{A}_{q,p}\big(\widehat{\mathfrak{gl}}(N)_{c}\big)$ was conditioned by a so-called ``critical'' relation between the elliptic modulus or nome~$p$, the quantum deformation parameter~$q$ and the central charge~$c$. This critical relation is parametrized by two integers~$(m,n)$, and defines surfaces $\mathscr{S}_{m,n}$ in the $(p,q,c)$ moduli space.
The structure functions were identified as particular ratios of elliptic functions.
Characterizing these structures as $q$-deformed $W$-algebras was made possible by first finding a second constraint on $p$, $q$, $c$, yielding now structure functions degenerating to~$1$.
This second constraint may thus be consistently called ``abelianity condition''
and defines a line on the surface $\mathscr{S}_{m,n}$.
The expansion of the structure functions around this constraint, by infinitesimally relaxing it, yields Poisson structures, which could then be compared to, and in some cases identified with, the original ones in \cite{FR1, FR}.
The full quantum structures could then be identified as natural quantizations of these Poisson structures.
The derivation of this second ``abelianity constraint'' however assumed a very specific pattern of cancellation inside the elliptic structure functions.
It was therefore a natural question whether more general cancellation patterns occur which may then lead to new abelianity conditions (and as a consequence new Poisson structures).
We will address this issue here, and determine the most general cancellation pattern of the structure functions, within a given ``fundamental'' scheme using the particular form of the structure functions as ratios of products of a single elliptic function with shifted/modified arguments, and a remarkable periodicity property of this component function.
Quite remarkably, it turns out that:
\begin{enumerate}\itemsep=0pt
\item All abelianity lines are identified as intersections of critical surfaces (generically a countable set of such surfaces).
\item Intersections of critical surfaces yielding abelianity lines are characterized by a diophantine-type condition of integrity of a certain ratio of combinations of their integer parameters.
\end{enumerate}

We shall now detail this derivation, starting with a reminder of the general frame of~\cite{AFR19} and prepare some notations. The main result shall then be stated precisely, and its proof will be given in a detailed way.
We then compute explicit associated Poisson structures and compare them along different surfaces converging onto the same line.
These Poisson structures are realized by linear combinations of a few fundamental elliptic functions, identified as logarithmic derivatives of the short Jacobi theta function. The specific elliptic functions depend only on the abelianity line itself, whichever realization by an intersection is achieved.
Only the constant rational coefficients and the span of the linear sums depend on the critical surface along which the PB structure is expanded.

\section[Quadratic subalgebras in $\gelpn$]{Quadratic subalgebras in $\boldsymbol{\gelpn}$}\label{sect:Wpq}

The central object of our study are quadratic subalgebras $\cW^{(m,n)}_{pqc}(N)$ in the
quantum elliptic algebra $\gelpn$, parametrized by two integers $m,n\in\ZZ$ (in addition to the parameters~$p$,~$q$,~$c$ of the quantum elliptic algebra). We refer to \cite{AFR17,AFR19} for the full construction, and summarize the main points needed for our present study.

The $\cW^{(m,n)}_{pqc}(N)$ subalgebras are defined on surfaces in the three-dimensional parameter space spanned by $(q,p,c)$
\begin{equation}\label{eq:surf}
\mathscr{S}_{m,n}\colon \  \big({-}p^{\frac{1}{2}}\big)^{m} \big({-}p^{*\frac{1}{2}}\big)^{n} = q^{-N} ,
\end{equation}
where $p^*=pq^{-2c}$. We introduce the operators $t_{m,n}^{(k)}(z)$, $1\leq k\leq N$, generating the subalgebra
$\cW^{(m,n)}_{pqc}(N)$, and $L(z)$ the Lax operator of the $\gelpn$ algebra.
The $\cW^{(m,n)}_{pqc}(N)$ algebra is defined by the following proposition, proved in~\cite{AFR19}:
\begin{Proposition}
On the surface $\mathscr{S}_{m,n}$, one has:
\begin{enumerate}\itemsep=0pt
\item[$a)$] The generators $t_{m,n}^{(k)}(z)$ obey the following exchange relation with $L(w)$:
\begin{equation}
t_{m,n}^{(k)}(z) L(w) = \prod_{i=1}^{k}\frac{\cF_{-m}(z_i/w) }{ \cF^*_{n}(z_i/w)} L(w) t_{m,n}^{(k)}(z) .
\label{eq:exchtL}
\end{equation}
The function $\cF_a(x)$ is expressed in terms of the function $\cU(x)$ defined in \eqref{eq:defU} as
\begin{equation*}
\cF_{a}(x) =
\begin{cases}
\displaystyle \prod_{\ell=0}^{a-1} \cU\big(\big({-}p^{\frac{1}{2}}\big)^\ell x\big) & \text{for $a > 0$},
\\
1 & \text{for $a = 0$}, \\
\displaystyle \prod_{\ell=1}^{|a|} \cU\big(\big({-}p^{\frac{1}{2}}\big)^{-\ell} x\big)^{-1} & \text{for $a < 0$},
\end{cases}
\qquad \cF^*_a(x) = \cF_a(x)\big\vert_{p \to p^*}. %\label{eq:exprFn}
\end{equation*}

\item[$b)$] They realize quadratic subalgebras in $\gelpn$ with quadratic exchange relations for $1 \le k,k' \le N$:
\begin{equation}\label{eq:exchtt}
t_{m,n}^{(k)}(z) t_{m,n}^{(k')}(w) = \prod_{i=(1-k)/2}^{(k-1)/2} \prod_{j=(1-k')/2}^{(k'-1)/2} \mathcal{Y}_{m,n}\big(q^{i-j}z/w\big) t_{m,n}^{(k')}(w) t_{m,n}^{(k)}(z),
\end{equation}
where the function $\mathcal{Y}_{m,n}(x)$ is given by
\begin{equation}\label{eq:funcY}
\mathcal{Y}_{m,n}(x) = \frac{\cF_{n}^*(x)\cF_{-n}^*(x)}{\cF_{m}(x)\cF_{-m}(x)}
= \frac{\displaystyle \prod_{\ell=1}^{|m|} \mathcal{U}\big(\big({-}p^{\frac{1}{2}}\big)^{-\ell} x\big)
\prod_{\ell=1}^{|n|-1} \mathcal{U}\big(\big({-}p^{*\frac{1}{2}}\big)^{\ell} x\big)}
{\displaystyle \prod_{\ell=1}^{|m|-1} \mathcal{U}\big(\big({-}p^{\frac{1}{2}}\big)^{\ell} x\big)
\prod_{\ell=1}^{|n|} \mathcal{U}\big(\big({-}p^{*\frac{1}{2}}\big)^{-\ell} x\big)} .
\end{equation}
The function $\cU(z)$ is defined using the short Jacobi $\theta$ function
\begin{equation}\label{eq:defU}
\cU(z)=q^{\frac2N-2} \frac{\theta_{q^{2N}}\big(q^2z^2\big) \theta_{q^{2N}}\big(q^2z^{-2}\big)} {\theta_{q^{2N}}\big(z^2\big)\theta_{q^{2N}}\big(z^{-2}\big)} .
\end{equation}
\end{enumerate}
\end{Proposition}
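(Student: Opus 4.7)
The strategy is to apply the $RLL$ exchange relation defining $\gelpn$ iteratively to the ``fused'' trace definition of the generators $t_{m,n}^{(k)}(z)$ given in \cite{AFR19}, and to exploit the critical condition~\eqref{eq:surf} to force the auxiliary-space $R$-matrix action to collapse onto a purely scalar factor expressible through $\cU$.

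\textbf{Part (a).} Recall that $t_{m,n}^{(k)}(z)$ is constructed by contracting, via a quantum antisymmetrizer and a trace over $k$ auxiliary spaces, a combination of $m$ Lax operators and $n$ dual Lax operators evaluated at spectral points $(-p^{1/2})^\ell z_i$ and $(-p^{*1/2})^\ell z_i$. To establish \eqref{eq:exchtL} we commute the external $L(w)$ successively past each internal $L$-type factor using the $\gelpn$ $RLL$-relation, and past each $L^*$-type factor using its $p \to p^*$ counterpart. Every elementary commutation attaches an auxiliary-space $R$-matrix whose spectral argument is shifted by the corresponding power of $-p^{1/2}$ or $-p^{*1/2}$. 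The critical condition $(-p^{1/2})^m (-p^{*1/2})^n = q^{-N}$ aligns the total cumulated shift with the quasi-period of the short theta function $\theta_{q^{2N}}$, so that after projection on the antisymmetrizer the composite $R$-matrix reduces to a scalar times the identity on the auxiliary tensor product. The scalar is, by inspection of $\cU$, the product over $i=1,\dots,k$ of $\cF_{-m}(z_i/w)/\cF^*_n(z_i/w)$, proving \eqref{eq:exchtL}.

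\textbf{Part (b).} Statement \eqref{eq:exchtt} then follows by applying (a) in both directions. Writing $t_{m,n}^{(k')}(w)$ itself as a $k'$-fold fused trace of Lax operators at spectral points $q^{j}w$, $j=(1-k')/2,\dots,(k'-1)/2$, and commuting $t_{m,n}^{(k)}(z)$ successively through each of these $L(q^j w)$ factors using (a) yields
\begin{equation*}
t_{m,n}^{(k)}(z)\, t_{m,n}^{(k')}(w) = \prod_{i,j} \frac{\cF_{-m}\big(q^{i-j}z/w\big)}{\cF^*_n\big(q^{i-j}z/w\big)} \, t_{m,n}^{(k')}(w)\, t_{m,n}^{(k)}(z) ,
\end{equation*}
the shifts $z_i/w_j$ having been rewritten as $q^{i-j}z/w$ since the antisymmetrizers absorb the $q$-shifts attached to the fusion. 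Running the same argument with the roles of $z$ and $w$ interchanged produces the complementary scalar built from $\cF_{m}$ and $\cF^*_{-n}$. Associativity (equivalently: consistency of both commutation orderings) forces the full symmetric product, which by the definition~\eqref{eq:funcY} of $\mathcal{Y}_{m,n}$ is exactly $\prod_{i,j}\mathcal{Y}_{m,n}\big(q^{i-j}z/w\big)$.

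\textbf{Main obstacle.} The nontrivial step is the scalar collapse in Part~(a): showing that on the critical surface the iterated product of elliptic $R$-matrices reduces, on the antisymmetrized auxiliary spaces, to a multiple of the identity. It is precisely here that the relation \eqref{eq:surf} is used, through the quasi-periodicity of $\theta_{q^{2N}}$ under $z \to q^{N}z$ combined with the pairing of $p$-shifts and $p^*$-shifts made possible by $p^*=pq^{-2c}$. Once this collapse is established, the rest of the proof is essentially combinatorial bookkeeping; the case split $a > 0$, $a=0$, $a < 0$ in the definition of $\cF_a$ is accommodated automatically by inverting the relevant $\cU$-factor whenever the corresponding shift is reversed.
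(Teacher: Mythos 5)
You should first note that the paper you are writing into does not actually prove this Proposition: it is imported verbatim from \cite{AFR19} (``\dots defined by the following proposition, proved in \cite{AFR19}''), so there is no in-paper argument to match, and your sketch has to stand on its own against the construction of \cite{AFR17,AFR19}. Measured against that construction, your starting point is wrong: the generators $t^{(k)}_{m,n}(z)$ are \emph{not} obtained by fusing $m$ Lax operators and $n$ dual Lax operators at points $\big({-}p^{1/2}\big)^{\ell}z_i$. They are quadratic (for $k=1$) and higher-rank antisymmetrized traces built from \emph{two} $L$-type factors per auxiliary space (an $L$ at a shifted argument together with an inverse/dual $L$), independently of $m$ and $n$; the integers $m,n$ enter only through the critical relation \eqref{eq:surf} and through the number of quasi-periodicity shifts $x\to -p^{1/2}x$, $x\to -p^{*1/2}x$ one must apply to the elliptic $R$-matrix to close the exchange computation. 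It is precisely this iteration of quasi-periodicity identities that produces the products over $\ell$ in $\cF_{\pm m}$ and $\cF^*_{\pm n}$, including the asymmetric ranges ($|m|$ factors upstairs versus $|m|-1$ downstairs in \eqref{eq:funcY}), which your fused-trace picture cannot reproduce. Moreover, the step you yourself flag as the ``main obstacle'' --- the collapse of the accumulated $R$-matrix factors to a scalar multiple of the identity, with the scalar identified ``by inspection of $\cU$'' --- is exactly the content of part (a); asserting it is not proving it. One needs the explicit quasi-periodicity/crossing relations of the elliptic $R$-matrix of $\gelpn$, whose proportionality factor is $\cU$, and the surface condition to make the chain of shifts close on $q^{-N}$.

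Part (b) as you argue it is also not a valid derivation. If commuting $t^{(k)}_{m,n}(z)$ through the constituents of $t^{(k')}_{m,n}(w)$ only produced $\prod_{i,j}\cF_{-m}\big(q^{i-j}z/w\big)/\cF^*_{n}\big(q^{i-j}z/w\big)$, then no appeal to ``associativity'' or to consistency of the two orderings could manufacture the missing factors: consistency can only constrain an exchange function already obtained, not generate new pieces of it. The correct mechanism is that $t^{(k')}_{m,n}(w)$ contains, in each auxiliary space, both an $L$-type and an inverse/dual-$L$-type factor; applying part (a) to the former and its mirror identity to the latter yields respectively the $\cF_{-m}/\cF^*_{n}$ and $\cF^*_{-n}/\cF_{m}$ contributions, whose product is $\cY_{m,n}$ as in \eqref{eq:funcY}. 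So the overall strategy (iterate $RLL$, use \eqref{eq:surf} and theta quasi-periodicity) points in the right direction, but both the definition you feed into it and the two key steps (scalar collapse in (a), assembly of $\cY_{m,n}$ in (b)) are missing or incorrect as written.
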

We remind that the short Jacobi $\theta$ function is defined in terms of the infinite $q$-Pochhammer symbols $
(z;a)_\infty = \prod\limits_{n \ge 0} (1-z a^{n})$ by
\begin{equation*}%\label{theta:prod}
\theta_a(z) = (z;a)_\infty \big(az^{-1};a\big)_\infty .
\end{equation*}
It enjoys the following properties
\begin{equation*}%\label{id:theta}
\theta_{a^2}\big(a^2z\big) = \theta_{a^2}\big(z^{-1}\big) = -\frac{\theta_{a^2}(z)}{z}\qquad
\text{and} \qquad \theta_{a^2}(az) = \theta_{a^2}\big(az^{-1}\big) .
\end{equation*}
To ensure a proper definition of the elliptic quantum algebra $\gelpn$, and in particular the convergence of the short Jacobi $\theta$ functions as infinite products, we have to suppose that $|p|<1$ and $|q|<1$.

\begin{Remark}\label{rmk:22}
At this point, we need to elaborate the exact meaning of the exchange relation~\eqref{eq:exchtt}. It is to be understood as an equality of formal series expansion after a suitable Riemann--Hilbert splitting of the meromorphic function $\mathcal{Y}_{m,n}(x)$. It will then acquire supplementary terms (central extensions or higher spin operators) on poles or zeroes of this exchange function. We do not achieve this procedure here since we are only interested in the abelianity conditions, see below.
\end{Remark}

{\bf Line of abelianity.} We are interested in characterizing the situations where the subalgebra~\eqref{eq:exchtt} in $\gelpn$ becomes abelian. Demanding an abelian exchange relation between the generators $t_{m,n}^{(1)}(z)$ imposes $\mathcal{Y}_{m,n}(x) = 1$.
Once abelianity is obtained between these generators, the whole subalgebra \eqref{eq:exchtt} becomes abelian, because the exchange function in~\eqref{eq:exchtt} for generic $k$, $k'$ is a product of the $\mathcal{Y}_{m,n}$ function with shifted arguments. The conditions allowing $\mathcal{Y}_{m,n}(x) = 1$ will define a \emph{line of abelianity}.

Since this strict abelian condition implies that $\mathcal{Y}_{m,n}(x)$ has neither a pole nor a zero, the extra terms mentioned in Remark \ref{rmk:22} are not expected to appear and abelianity is indeed exact.
They will nevertheless contribute to the Poisson brackets computed in the neighborhood of the abelianity line, see Section~\ref{sect:poisson}.

In the following, it will be convenient to parametrize $p$ and $c$ (or equivalently $p$ and $p^*$)
on the surface $\mathscr{S}_{m,n}$ through the relation
\begin{equation}\label{eq:paramppstar}
-p^{\frac{1}{2}} = q^{-N\lambda/m} \qquad \text{and} \qquad -p^{*\frac{1}{2}} = q^{-N\lambda^*/n}  .
\end{equation}
The surface condition then reads $\lambda+\lambda^*=1$, and the central charge is given by
\begin{equation}\label{eq:cc}
c = -N \left( \frac{\lambda}{m}-\frac{\lambda^*}{n} \right) .
\end{equation}

A first step in deriving abelianity conditions was performed in \cite{AFR17,AFR19}, where the following result was shown:
\begin{Proposition}\label{prop:ligne}
On the surface $\mathscr{S}_{m,n}$, the generators $t_{m,n}^{(k)}(z)$ realize an abelian subalgebra in $\mathcal{A}_{q,p}\big(\widehat{\mathfrak{gl}}(N)_{c}\big)$ when $\lambda$, $\lambda^*$ take non-vanishing \emph{integer} values.
\end{Proposition}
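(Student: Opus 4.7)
The plan is to show directly that under the stated integer conditions, the structure function $\mathcal{Y}_{m,n}(x)$ of \eqref{eq:funcY} collapses to $1$. As noted in the text, abelianity of the full subalgebra follows from abelianity of the $k=k'=1$ relation, so only this single scalar identity needs to be established. The strategy has two independent ingredients: (i) a quasi-periodicity of the building block $\cU$, and (ii) a re-indexing of the four products in \eqref{eq:funcY} using the surface parametrization \eqref{eq:paramppstar}.

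The first step, which I expect to be the only nontrivial computation, is to prove the lemma
\begin{equation*}
\cU\big(q^N z\big) = \cU(z).
\end{equation*}
This follows from the quasi-periodicity $\theta_{a^2}\big(a^2 w\big) = -w^{-1}\theta_{a^2}(w)$ with $a^2=q^{2N}$, applied to each of the four theta factors in \eqref{eq:defU}: the shift $z \to q^N z$ sends $z^2 \to q^{2N}z^2$ and $z^{-2}\to q^{-2N}z^{-2}$, so every theta factor picks up an explicit prefactor, and a short book-keeping shows that the prefactors arising in the numerator and denominator of $\cU(q^N z)$ are identical. Since both sides are expressed as equal ratios of the same four basic theta values, the claim follows. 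In particular, $\cU$ is invariant under $z\mapsto q^{Nr}z$ for every $r\in\ZZ$.

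The second step uses the parametrization \eqref{eq:paramppstar}: on $\mathscr{S}_{m,n}$ we have $\big({-}p^{1/2}\big)^{m}=q^{-N\lambda}$ and $\big({-}p^{*1/2}\big)^{n}=q^{-N\lambda^{*}}$. When $\lambda,\lambda^{*}\in\ZZ\setminus\{0\}$, the lemma yields
\begin{equation*}
\cU\big(\big({-}p^{\tfrac12}\big)^{\ell-|m|}x\big)=\cU\big(\big({-}p^{\tfrac12}\big)^{\ell}x\big),\qquad
\cU\big(\big({-}p^{*\tfrac12}\big)^{\ell-|n|}x\big)=\cU\big(\big({-}p^{*\tfrac12}\big)^{\ell}x\big),
\end{equation*}
for every $\ell\in\ZZ$. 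Applying the first identity with $\ell\mapsto |m|-\ell$ converts the $p$-numerator of \eqref{eq:funcY} into $\prod_{j=0}^{|m|-1}\cU\big(({-}p^{1/2})^{j}x\big)$, which differs from the $p$-denominator only by the extra factor $\cU(x)$. A parallel re-indexing of the $p^{*}$-factors turns the $p^{*}$-denominator into $\prod_{j=0}^{|n|-1}\cU\big(({-}p^{*1/2})^{j}x\big)$, yielding a missing factor $\cU(x)^{-1}$. The two contributions cancel and $\mathcal{Y}_{m,n}(x)=1$.

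The main obstacle is really confined to the first step: one must verify that the four theta prefactors from the two shifts $z^2\to q^{2N}z^2$ and $z^{-2}\to q^{-2N}z^{-2}$ conspire to cancel exactly, including signs and the $q$-powers coming from $-w^{-1}$ factors. Once that is secured, the rest of the argument is combinatorial and requires only the hypothesis that $\lambda,\lambda^{*}$ be integers (nonzero so that the relation $\lambda+\lambda^{*}=1$ does not degenerate the parametrization \eqref{eq:paramppstar}). The argument is independent of the signs of $m$ and $n$, which only affect the bookkeeping conventions via the absolute values in \eqref{eq:funcY}.
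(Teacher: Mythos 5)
Your proposal is correct and follows essentially the same route as the paper: the paper itself defers the proof of this proposition to the earlier works \cite{AFR17,AFR19}, but the mechanism it describes (case (1) in the proof of Lemma~\ref{lem:abel}, i.e., cancellations confined to the $p$-block and the $p^*$-block of $\mathcal{Y}_{m,n}$ in \eqref{eq:funcY} via the $q^N$-periodicity of $\cU$) is exactly your re-indexing argument. Your verification of $\cU\big(q^Nz\big)=\cU(z)$ from the quasi-periodicity $\theta_{q^{2N}}\big(q^{2N}w\big)=-\theta_{q^{2N}}(w)/w$ is sound, so the two leftover factors $\cU(x)$ and $\cU(x)^{-1}$ indeed cancel, giving $\mathcal{Y}_{m,n}(x)=1$.
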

Note that the notation $\lambda^*$ corresponds to $\lambda'$ in \cite{AFR17,AFR19} (it is \emph{not} the complex conjugate of~$\lambda$!).

\section{Main results\label{sect:main}}
In this section, we expose our main results concerning the classification of the lines of abelianity, and their realization as intersections of critical surfaces.
\subsection{Generic abelianity lines as intersections}

\begin{Lemma}\label{lem:det}
Two different surfaces $\mathscr{S}_{m,n}$ and $\mathscr{S}_{m',n'}$ have a non-empty intersection if and only if $m\ne m'$, $n\ne n'$ and the determinant $\left|\begin{smallmatrix} m & m' \\ n & n' \end{smallmatrix}\right| \ne 0$.
 In that case, the parameters $p$, $p^*$ and $c$ are given by
\begin{equation}\label{eq:ppstar}
-p^{\frac{1}{2}} = q^{N\sfrac{n'-n}{m'n-mn'}} , \qquad
-p^{*\frac{1}{2}} = q^{N\sfrac{m-m'}{m'n-mn'}} ,
\end{equation}
and
\begin{equation}\label{eq:central}
c = N \frac{m'+n'-m-n}{m'n-mn'} .
\end{equation}
\end{Lemma}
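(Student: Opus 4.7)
The plan is to fix the parametrization \eqref{eq:paramppstar} of $\mathscr{S}_{m,n}$ and ask when a point described by $(\lambda,\lambda^*)$ with $\lambda+\lambda^*=1$ also lies on $\mathscr{S}_{m',n'}$. This reduces the problem to a $2\times 2$ linear system. Throughout I will use the hypothesis $|q|<1$, so that $q$ is not a root of unity and any identity $q^a=q^b$ forces $a=b$.

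On $\mathscr{S}_{m,n}$ I substitute $-p^{1/2}=q^{-N\lambda/m}$ and $-p^{*1/2}=q^{-N\lambda^*/n}$ into the defining equation of $\mathscr{S}_{m',n'}$; this yields $q^{-N(m'\lambda/m + n'\lambda^*/n)} = q^{-N}$, so the intersection is described by the linear system
\begin{equation*}
\lambda+\lambda^* = 1, \qquad \frac{m'}{m}\lambda + \frac{n'}{n}\lambda^* = 1,
\end{equation*}
whose determinant equals $(mn'-m'n)/(mn)$. Hence a (unique) solution exists exactly when $mn'-m'n\neq 0$. Cramer's rule gives $\lambda=m(n'-n)/(mn'-m'n)$ and $\lambda^* = n(m-m')/(mn'-m'n)$; substituting back into \eqref{eq:paramppstar} and \eqref{eq:cc} is routine arithmetic that reproduces exactly \eqref{eq:ppstar} and \eqref{eq:central}.

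For necessity I would treat the degenerate cases directly. If $mn'-m'n=0$ with $(m,n)\neq(m',n')$, the two rows of \eqref{eq:surf} are proportional with ratio different from $1$ while both right-hand sides equal $q^{-N}$, so the system is inconsistent and the intersection is empty. If $m=m'$ but $n\neq n'$, the explicit solution above forces $\lambda^*=0$, hence $-p^{*1/2}=1$ and so $p^*=1$, a value excluded from the moduli space on which $\gelpn$ is defined. The symmetric case $n=n'$ with $m\neq m'$ forces $p=1$ and is ruled out for the same reason.

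The only genuinely subtle point in the argument is this last one: a nonzero system determinant does not by itself guarantee a physically meaningful intersection, which is precisely why the conditions $m\neq m'$ and $n\neq n'$ must be listed separately from the determinant condition. Everything else is straightforward linear algebra and a short substitution into the already-established formulas \eqref{eq:paramppstar}--\eqref{eq:cc}.
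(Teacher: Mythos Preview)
Your argument is correct and is essentially the paper's proof in different clothing: both reduce the two surface conditions to a $2\times 2$ linear system in the exponents of $q$, solve it when the determinant $m'n-mn'$ is nonzero, and then treat the degenerate cases. The paper simply manipulates the defining equations directly (dividing them to get $(-p^{1/2})^{m'-m}=(-p^{*1/2})^{n-n'}$ and then using $|p|<1$ to force $n=n'\Rightarrow m=m'$) rather than first passing to the $(\lambda,\lambda^*)$ coordinates of~\eqref{eq:paramppstar}; your exclusion of $p=1$ or $p^*=1$ is the same observation read backwards. One small caveat: your route through \eqref{eq:paramppstar} tacitly assumes $m\ne 0$ and $n\ne 0$, a restriction the paper's direct manipulation does not need.
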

\begin{proof}
If two surfaces $\mathscr{S}_{m,n}$ and $\mathscr{S}_{m',n'}$ intersect, then $p$, $q$, $c$ have to satisfy simultaneously the two surface conditions $\big({-}p^{\frac{1}{2}}\big)^{m} \big({-}p^{*\frac{1}{2}}\big)^{n} = q^{-N}$ and $\big({-}p^{\frac{1}{2}}\big)^{m'} \big({-}p^{*\frac{1}{2}}\big)^{n'} = q^{-N}$.
This implies $\big({-}p^{\frac{1}{2}}\big)^{m'-m} = \big({-}p^{*\frac{1}{2}}\big)^{n-n'}$.
Thus $n=n'$ implies that $m=m'$ (and vice-versa). But then the
two surfaces coincide, which contradicts the hypothesis.
Hence, we must have $m\neq m'$ and $n\neq n'$.
Given that $p^*=pq^{-2c}$, the surface conditions can be rewritten
 as $\big({-}p^{\frac{1}{2}}\big)^{m+n} = q^{nc-N}$
and $\big({-}p^{\frac{1}{2}}\big)^{m'+n'} = q^{n'c-N}$, which leads to
\begin{eqnarray}
&&\label{eq:compat}
\big({-}p^{\frac{1}{2}}\big)^{m'n-mn'} = q^{N(n'-n)} \qquad \text{and} \qquad
\big({-}p^{\frac{1}{2}}\big)^{m'+n'-m-n} = q^{c(n'-n)} .
\end{eqnarray}
Then, since $q$ is not a root of unity, the first equation in \eqref{eq:compat} has a solution if and only if $n'm-m'n \ne 0$.
In that case, \eqref{eq:ppstar} and \eqref{eq:central} follow immediately.
Note that these relations are invariant in the exchange $(m,n) \leftrightarrow (m',n')$.
\end{proof}

\begin{Remark} Recalling the parametrization \eqref{eq:paramppstar}, one sees that on the intersection of two surfaces
$\mathscr{S}_{m,n}$ and $\mathscr{S}_{m',n'}$, one has the following values for the line viewed on $\mathscr{S}_{m,n}$
\begin{equation}\label{eq:lambda-inter}
\lambda=\frac{m(n-n')}{m'n-mn'}
\qquad \text{and}\qquad
\lambda^*=\frac{n(m'-m)}{m'n-mn'} .
\end{equation}
Note that the surface condition $\lambda+\lambda^*=1$ is then automatically satisfied.
\end{Remark}

\begin{Lemma}\label{lem:intersect}
Let $\mathscr{S}_{m,n}$ and $\mathscr{S}_{m',n'}$ be two surfaces with a non empty
intersection, defining a~line in the moduli space. There are a countable number of
surfaces $\mathscr{S}_{m'',n''}$ intersecting on this line. They are uniquely determined by the relation
\begin{equation}\label{eq:trois}
\frac{m'-m}{n'-n} = \frac{m''-m'}{n''-n'} .
\end{equation}
\end{Lemma}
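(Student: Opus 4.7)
The plan is to translate the geometric condition ``$\mathscr{S}_{m'',n''}$ contains the line $\mathscr{S}_{m,n}\cap\mathscr{S}_{m',n'}$'' into a single linear Diophantine condition on $(m'',n'')$, and then identify it with \eqref{eq:trois}. Lemma~\ref{lem:det} already gives me an explicit parametrization of the line by $q$ alone, via the formulas \eqref{eq:ppstar}. So I would simply substitute these expressions for $-p^{1/2}$ and $-p^{*1/2}$ into the defining equation $\big({-}p^{\frac{1}{2}}\big)^{m''}\big({-}p^{*\frac{1}{2}}\big)^{n''} = q^{-N}$ of $\mathscr{S}_{m'',n''}$, and demand that it holds identically in $q$. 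Since $q$ is not a root of unity, this is an equality of integer exponents.

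Collecting the powers of $q$ and clearing the common factor $N/(m'n-mn')$, the condition becomes the single linear equation
\begin{equation*}
(n'-n)\,m'' + (m-m')\,n'' = mn'-m'n .
\end{equation*}
A direct check shows that $(m,n)$ and $(m',n')$ are solutions (as they must be, since the original two surfaces both contain the line), so the affine solution set is $(m,n)$ plus the integer kernel of the homogeneous equation $(n'-n)X=(m'-m)Y$. Setting $d=\gcd(m'-m,n'-n)$ and writing $m'-m=d\alpha$, $n'-n=d\beta$ with $\gcd(\alpha,\beta)=1$, the general integer solution is $(m'',n'')=(m+k\alpha,\,n+k\beta)$ with $k\in\ZZ$, giving a countably infinite family (non-empty and infinite because $(\alpha,\beta)\neq(0,0)$ by Lemma~\ref{lem:det}).

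To match the statement, I would rewrite the homogeneous part as the collinearity relation $(n'-n)(m''-m)=(m'-m)(n''-n)$ and observe that, using $(m'-m,n'-n)=(m''-m,n''-n)-(m''-m',n''-n')$, this is equivalent to $(n'-n)(m''-m')=(m'-m)(n''-n')$, i.e., \eqref{eq:trois}. Conversely, any triple $(m'',n'')$ satisfying \eqref{eq:trois} lies on the line $(m,n)+\RR(m'-m,n'-n)$ in $\RR^2$, and substituting back into the surface equations shows it lies in $\mathscr{S}_{m'',n''}$, closing the equivalence.

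There is no real obstacle: the argument is essentially a one-line linear algebra statement once the parametrization \eqref{eq:ppstar} is invoked. The only mild subtlety is the bookkeeping needed to recognize that the condition obtained by ``$(m'',n'')$ collinear with $(m,n),(m',n')$'' is precisely the form \eqref{eq:trois} stated in the lemma, and to confirm that the integer solutions form a countably infinite (not merely countable) set, which follows from $m\neq m'$, $n\neq n'$ established in Lemma~\ref{lem:det}.
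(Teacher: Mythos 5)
Your proof is correct, and it reaches the lemma by a genuinely more direct route than the paper. The paper imposes the pairwise relations \eqref{eq:ppstar} and \eqref{eq:central} for every choice of two pairs among $(m,n)$, $(m',n')$, $(m'',n'')$, obtaining \eqref{eq:troisA}--\eqref{eq:troisB}, extracts \eqref{eq:trois} as a necessary condition by dividing the second family of identities by the first, proves sufficiency by reversing the manipulation (multiplying the cross-multiplied form of \eqref{eq:trois} by $n'$ and by $m'$, plus circular permutations), and gets countability from the particular family $m''=m'+u(m-m')$, $n''=n'+u(n-n')$. You instead substitute the line's parametrization \eqref{eq:ppstar} once into the defining equation of $\mathscr{S}_{m'',n''}$ and, using that $q$ is not a root of unity, reduce containment of the line to the single linear Diophantine equation $(n'-n)m''+(m-m')n''=mn'-m'n$, whose integer solution set you determine completely as $(m'',n'')=(m,n)+k(\alpha,\beta)$ with $(\alpha,\beta)=(m'-m,n'-n)/\gcd(m'-m,n'-n)$; identifying the homogeneous part with the collinearity relation $(n'-n)(m''-m)=(m'-m)(n''-n)$, which is algebraically the same as \eqref{eq:trois}, closes the equivalence. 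What your route buys: necessity and sufficiency in one step, no division of identities (hence no hidden nonvanishing caveats such as $m''+n''-m'-n'\neq 0$), and a complete enumeration of the admissible $(m'',n'')$, of which the paper's exhibited family is the subprogression with step $\gcd(m'-m,n'-n)$ times the primitive direction. What the paper's route keeps visible is the symmetry of the data \eqref{eq:troisA}--\eqref{eq:troisB} under circular permutation of the three surfaces. Two cosmetic remarks only: since \eqref{eq:trois} is written as a ratio, the degenerate value $n''=n'$ must be read in the cross-multiplied form you actually use (with your parametrization it occurs only at $(m'',n'')=(m',n')$), and your checks that $\alpha\neq 0$, $\beta\neq 0$ rely on $m\neq m'$, $n\neq n'$ from Lemma~\ref{lem:det}, which you correctly invoke.
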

\begin{proof}
Suppose there is a third surface $\mathscr{S}_{m'',n''}$ intersecting on
$\mathscr{S}_{m,n}\cap\mathscr{S}_{m',n'}$. Then equations~\eqref{eq:ppstar} and \eqref{eq:central} hold for any choice of two pairs in $\{(m,n),(m',n'),(m'',n'')\}$:
\begin{gather}
\label{eq:troisA} \frac{n'-n}{m'n-mn'} = \frac{n''-n'}{m''n'-m'n''} = \frac{n''-n}{m''n-mn''} , \\
\label{eq:troisB} \frac{m'+n'-m-n}{m'n-mn'} = \frac{m''+n''-m'-n'}{m''n'-m'n''} = \frac{m''+n''-m-n}{m''n-mn''} .
\end{gather}
Dividing term by term the second equation by the first one, one gets
\begin{equation}\label{eq:troisC}
\frac{m'-m}{n'-n} = \frac{m''-m'}{n''-n'} = \frac{m''-m}{n''-n} .
\end{equation}
Thus, relation \eqref{eq:trois} is a necessary relation for the surface $\mathscr{S}_{m'',n''}$ to exist. Note that this relation implies the second equality in \eqref{eq:troisC}, i.e., assuming \eqref{eq:trois} implies all the equalities deduced by circular permutation of the pairs $(m,n)$, $(m',n')$, $(m'',n'')$.

Now writing \eqref{eq:trois} as $(m'-m)(n''-n')-(m''-m')(n'-n)=0$ and multiplying it by~$n'$, one finds the first equality of \eqref{eq:troisA}. Multiplying instead by~$m'$, one finds the first equality of~\eqref{eq:troisB}. The other two equalities are found by a circular permutation on the pairs $(m,n)$, $(m',n')$, $(m'',n'')$. Hence equation~\eqref{eq:trois} is equivalent to \eqref{eq:troisA}--\eqref{eq:troisB}. As such, it is a necessary and sufficient condition for $\mathscr{S}_{m'',n''}$ to exist.

Finally, choosing $m''=m' + u(m-m')$ and $n''=n'+u(n-n')$ with $u\in\ZZ$, one shows at the same time that~\eqref{eq:trois} admits solutions, and that there are a countable number of them.
\end{proof}

\begin{Theorem}\label{thm:line}
When non-empty, the intersection of two surfaces $\mathscr{S}_{m,n}$ and $\mathscr{S}_{m',n'}$ is a line of abelianity of $\mathscr{S}_{m,n}$ if and only if one of the following conditions is satisfied:
\begin{gather}
(a) \quad \frac{m(n-n')}{m'n-mn'} \in \ZZ \qquad \text{or equivalently} \qquad \frac{n(m'-m)}{m'n-mn'} \in \ZZ, \label{eq:thmb}\\
(b) \quad \frac{m+n-m'-n'}{m'n-mn'} \in \ZZ \qquad\text{and} \qquad
(m+n)(m'+n')\neq0,\label{eq:thma}\\
(c)\quad (m',n')=\pm(1,-1) \qquad \text{and} \qquad m,n\in\ZZ,\nonumber\\
(c')\quad (m,n)=\pm(1,-1) \qquad \text{and}\qquad m',n'\in\ZZ .\nonumber
\end{gather}
In case $(a)$, the intersection might not be a line of abelianity of $\mathscr{S}_{m',n'}$. In cases~$(b)$,~$(c)$ and~$(c')$, the intersection is also a~line of abelianity of~$\mathscr{S}_{m',n'}$.
\end{Theorem}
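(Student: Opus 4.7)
The plan is to reduce the abelianity condition $\mathcal{Y}_{m,n}(x)\equiv 1$ to an arithmetic condition on $(m,n,m',n')$ by substituting into \eqref{eq:funcY} the explicit values of $\lambda,\lambda^*$ at the intersection given by~\eqref{eq:lambda-inter}. Under the parametrization~\eqref{eq:paramppstar}, every argument $(-p^{\frac12})^\ell x$ or $(-p^{*\frac12})^\ell x$ appearing in~\eqref{eq:funcY} becomes of the form $q^\alpha x$ with $\alpha\in\QQ$ entirely fixed by $(m,n,m',n')$; thus $\mathcal{Y}_{m,n}(x)$ reduces to an explicit product of $\theta_{q^{2N}}$-ratios, and the identity $\mathcal{Y}_{m,n}(x)\equiv 1$ becomes a statement on how the resulting zeros and poles match.

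For sufficiency I would treat the four cases separately. Case~$(a)$ is Proposition~\ref{prop:ligne}: integrality of~$\lambda$ forces $\lambda^*=1-\lambda\in\ZZ$, and the non-vanishing hypothesis of that proposition is automatic because Lemma~\ref{lem:det} gives $n\neq n'$ and $m\neq m'$ on any intersection. Case~$(b)$ is the genuinely new content: by~\eqref{eq:central} the condition reads $c/N\in\ZZ$, equivalently $-p^{*\frac12}=(-p^{\frac12})q^{kN}$ for some $k\in\ZZ$. I would first establish the periodicity $\cU(q^N z)=\cU(z)$ from~\eqref{eq:defU} using the shift $\theta_{q^{2N}}(q^{2N}w)=-\theta_{q^{2N}}(w)/w$ recalled just after~\eqref{eq:defU}, which makes every ``starred'' factor $\cU((-p^{*\frac12})^\ell x)$ coincide with the ``unstarred'' $\cU((-p^{\frac12})^\ell x)$. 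Combining this with the surface relation~\eqref{eq:surf} (solvable for $-p^{\frac12}$ precisely because $m+n\neq 0$) and its analogue on $\mathscr{S}_{m',n'}$, one verifies that the numerator and denominator of~\eqref{eq:funcY} match term by term. Cases~$(c)$, $(c')$ are the degenerate limits of~$(b)$: for $(m',n')=\pm(1,-1)$, the constraint~\eqref{eq:surf} on $\mathscr{S}_{m',n'}$ directly fixes $c=\mp N\in N\ZZ$, so the same telescoping applies for any integer pair $(m,n)$ with non-empty intersection.

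Necessity requires showing that $(a)$--$(c')$ exhaust the possibilities. Since each $\cU$ factor has explicit zeros at $z=\pm q^{Nj\pm 1}$ and poles at $z=\pm q^{Nj}$ (read off from~\eqref{eq:defU}), the identity $\mathcal{Y}_{m,n}(x)\equiv 1$ forces an exact pairing of zero/pole multiplicities at every point of the $x$-plane, which translates to a finite system of linear congruences on the integer exponents of~$q$ carried by the various shifted arguments. I would classify the integer solutions according to whether cancellation occurs within the unstarred sector (yielding~$(a)$), across starred versus unstarred sectors (yielding~$(b)$), or via shortened products in a degenerate regime (yielding~$(c)$, $(c')$). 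The asymmetry claim in~$(a)$ is then transparent from~\eqref{eq:lambda-inter}: the value $\lambda$ is proportional to $m$ whereas its analogue on $\mathscr{S}_{m',n'}$ is proportional to $m'$, so integrality is not preserved under $(m,n)\leftrightarrow(m',n')$; conversely, conditions~$(b)$, $(c)$, $(c')$ are manifestly symmetric in this swap.

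The main obstacle will be case~$(b)$: combining the $q^N$-periodicity of~$\cU$ with the two surface constraints to produce a full telescoping without requiring the stronger case-$(a)$ condition, and then verifying exhaustively on the necessity side that no further cancellation scheme survives outside the two infinite families $(a)$, $(b)$ and their degenerations $(c)$, $(c')$. This is precisely the ``more general cancellation pattern'' announced in the introduction and constitutes the essential new content of the theorem.
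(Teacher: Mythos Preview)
Your sufficiency argument is sound and, for case~(b), actually cleaner than the paper's route. The paper first proves an intrinsic classification (Lemma~\ref{lem:abel}): abelianity on $\mathscr{S}_{m,n}$ holds iff either $\lambda\in\ZZ$, or $\frac{\lambda}{m}-\frac{\lambda^*}{n}\in\ZZ$ together with $d\mid(m+n)$ where $d$ is the reduced denominator of $\lambda/m$. It then translates condition~(b) into this second case, using a somewhat indirect argument with $\delta=\gcd(m+n,m'+n')$ to establish $d\mid(m+n)$. Your observation that $c/N\in\ZZ$ collapses every starred $\cU$-factor onto its unstarred counterpart, after which the surface relation $(-p^{1/2})^{m+n}=q^{N(\text{integer})}$ forces the residual $|m|-|n|$ factors on each side to form complete residue systems modulo $m+n$, gives the telescoping in one stroke and makes the symmetry under $(m,n)\leftrightarrow(m',n')$ manifest. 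This is a genuine simplification.

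The necessity direction, however, has a real gap. You write that the pole/zero matching ``translates to a finite system of linear congruences'' and that you would ``classify the integer solutions according to whether cancellation occurs within the unstarred sector\dots or across starred versus unstarred sectors''. This is the correct dichotomy, but the substance of the theorem lies in proving that cross-sector cancellation forces \emph{exactly} $\frac{\lambda}{m}-\frac{\lambda^*}{n}\in\ZZ$ and nothing weaker. The paper's mechanism (Section~\ref{proof:abel}, Step~2) is not a routine congruence check: after reducing $\mathcal{Y}_{m,n}$ via periodicity to the irredundant form~\eqref{eq:Ymnreductot}, one must first argue that the reduced denominators satisfy $d=d'$, and then that the only bijection $\sigma\in\fS_{\bar\mu}$ for which $\frac{aj-b\sigma(j)}{d}\in\ZZ$ for all $j$ is the identity. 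The latter is a short but genuine combinatorial argument (an induction using $|\sigma(i)-i\sigma(1)|<d$), and without it one cannot exclude more exotic cross-cancellation schemes yielding abelianity outside~(a)--(c$'$). Your proposal does not identify this step, and the phrase ``finite system of linear congruences'' undersells it: the congruences are coupled through an unknown permutation, and ruling out nontrivial $\sigma$ is precisely where the work is. You should either supply this permutation argument or invoke Lemma~\ref{lem:abel} as the paper does.
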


\begin{Theorem}\label{thm:surf}Any line of abelianity on a surface $\mathscr{S}_{m,n}$ can be identified with the intersection of a~countable number of suitable surfaces $\mathscr{S}_{m',n'}$.
\end{Theorem}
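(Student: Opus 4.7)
The plan is to directly analyse the abelianity equation $\mathcal{Y}_{m,n}(x)=1$ on $\mathscr{S}_{m,n}$, read off from its solutions the possible values of $\lambda$, and for each value exhibit a companion surface $\mathscr{S}_{m',n'}$ whose intersection with $\mathscr{S}_{m,n}$ reproduces the given abelianity line. Countability of the family of such companion surfaces will then follow for free from Lemma~\ref{lem:intersect}.

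First I would parametrize $\mathscr{S}_{m,n}$ by $\lambda$ and $q$ via \eqref{eq:paramppstar} and substitute into \eqref{eq:funcY}: the arguments of the $\mathcal{U}$-factors become pure powers of $q$ times $x$, so the identity $\mathcal{Y}_{m,n}(x)=1$ amounts to a term-by-term cancellation between the two products of $\mathcal{U}$-factors. Using the symmetry $\mathcal{U}(z)=\mathcal{U}(1/z)$ together with the periodicity property of $\mathcal{U}$ emphasized in the introduction, such cancellations can happen only when the multisets of exponents in numerator and denominator are related by an explicit arithmetic matching. A systematic bookkeeping of these matchings recovers precisely the three patterns of Theorem~\ref{thm:line}: integrality of $\lambda$ (case~$(a)$), integrality of the combination $(m+n-m'-n')/(m'n-mn')$ related to $c/N$ via \eqref{eq:central} (case~$(b)$), and the degenerate patterns associated with the fixed-$c$ surfaces $\mathscr{S}_{\pm(1,-1)}$ (cases $(c)$, $(c')$).

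Second, for each admissible value of $\lambda$ I would invert \eqref{eq:lambda-inter} (or \eqref{eq:central} in case~$(b)$) to produce an explicit integer pair $(m',n')$. For instance, if $\lambda=k\in\ZZ^*$ then \eqref{eq:lambda-inter} reduces to the linear diophantine equation $(k-1)\,m(n'-n)=k\,n(m'-m)$, whose general solution $(m',n')=(m+knt,\,n+(k-1)mt)$ with $t\in\ZZ\setminus\{0\}$ already exhibits a first companion; analogous explicit parametric families handle the other $(a)$- and $(b)$-constraints, while cases $(c)$ and $(c')$ are realized simply by taking $(m',n')=\pm(1,-1)$. Once any such single companion is in hand, Lemma~\ref{lem:intersect} automatically delivers the whole countable family $\{(m+u(m'-m),\,n+u(n'-n))\mid u\in\ZZ\}$ of critical surfaces meeting on the same line.

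The hardest part will be step one: establishing the \emph{exhaustiveness} of the cancellation patterns, i.e.\ ruling out exotic solutions of $\mathcal{Y}_{m,n}(x)=1$ that do not correspond to any of the integer matchings above. This requires a careful analysis of the zero and pole structure of $\mathcal{U}$ built from the short Jacobi theta function, together with the observation that any putative extra cancellation would violate the $q^{2N}$-periodicity constraint governing the theta arguments. Once this combinatorial input is secured, the construction of companion surfaces and the appeal to Lemma~\ref{lem:intersect} are essentially routine.
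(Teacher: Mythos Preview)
Your strategy---classify the admissible $\lambda$ on $\mathscr{S}_{m,n}$ by analysing $\mathcal{Y}_{m,n}(x)=1$, then for each admissible value exhibit a companion $(m',n')$ and invoke Lemma~\ref{lem:intersect} for countability---is exactly the route the paper takes. Two corrections are in order.

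First, your ``hardest part'' (step one) is already packaged separately as Lemma~\ref{lem:abel} and proved in Section~\ref{proof:abel}; the paper's proof of Theorem~\ref{thm:surf} simply takes that classification as input (abelianity $\Leftrightarrow$ $\lambda\in\QQ$ with one of the two arithmetic constraints) and concentrates on the inversion step. You should cite Lemma~\ref{lem:abel} rather than re-derive the combinatorics of $\mathcal{U}$-cancellations inside the present proof.

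Second, your explicit family in case~$(a)$ does not solve your own diophantine equation. Plugging $(m',n')=(m+knt,\,n+(k-1)mt)$ into $(k-1)\,m(n'-n)=k\,n(m'-m)$ gives $(k-1)^2m^2t$ on the left and $k^2n^2t$ on the right. The roles are swapped: a correct one-parameter family is
\[
(m',n')=\bigl(m(1+(k-1)t),\,n(1+kt)\bigr),\qquad t\in\ZZ\setminus\{0\},
\]
which is precisely the paper's construction via B\'ezout coefficients of $(\lambda,\lambda^*)=(k,1-k)$, namely $m'=\ell m$, $n'=\ell' n$ with $\ell\lambda+\ell'\lambda^*=1$. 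For case~$(b)$ you should likewise write the family down rather than wave at ``analogous'' solutions: the paper uses $(m',n')=(m-bu,\,n+au)$ with $\lambda/m=a/d$, $\lambda^*/n=b/d$ and $u\in\ZZ/\gcd(a,b)$, and checking that this actually reproduces the given $\lambda$ is a short but necessary computation.

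Your appeal to Lemma~\ref{lem:intersect} for countability is a clean finish; the paper instead rebuilds the countable family by hand in each case, which amounts to the same thing.
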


The proof of these two theorems is postponed until Section~\ref{sect:proofs}.

\subsection{Enhanced abelianities}

Theorems~\ref{thm:line} and~\ref{thm:surf} describe generic lines of abelianity on critical surfaces $\mathscr{S}_{m,n}$, i.e., such that the sole characterizing commutation property be $\big[t^{(k)}_{m,n}(z), t^{(k')}_{m,n}(w)\big] = 0$, $1\leq k,k'\leq N$. We have identified several critical surfaces on which stronger commutation properties prevail, which can be overall characterized as ``enhanced abelianity''. They are described in the following propositions.

\begin{Proposition}\label{prop:Scrit}\quad
\begin{enumerate}\itemsep=0pt
\item[$(i)$]
The surface $\mathscr{S}_{1,-1}$ corresponds to the critical level $c\!=\!-N$, for which the genera\-tors~$t^{(k)}_{1,-1}(z)$ lie in the extended center of $\gelpn$, without any further condition on $q$, $p$.
\item[$(ii)$] The intersection of $\mathscr{S}_{1,-1}$ with any surface
 $\mathscr{S}_{m',n'}$, when non-empty, provides an abelianity line for $\mathscr{S}_{m',n'}$. The set of such lines is dense on the surface $\mathscr{S}_{1,-1}$.
\end{enumerate}
\end{Proposition}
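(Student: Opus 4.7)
The plan is to prove part $(i)$ by a direct computation exploiting an intrinsic $q^{N}$-periodicity of the building block $\cU$, and to deduce part $(ii)$ as a corollary of Theorem~\ref{thm:line} combined with the density of the rationals.

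For part $(i)$, specialising the surface condition \eqref{eq:surf} to $(m,n)=(1,-1)$ gives $-p^{1/2}/(-p^{*1/2})=q^{-N}$, which, combined with $p^{*}=pq^{-2c}$, immediately forces $c=-N$. Next, \eqref{eq:funcY} for $(m,n)=(1,-1)$ retains only the $\ell=1$ factors, and because $\cU$ is independent of $p$ one finds
\begin{equation*}
\mathcal{Y}_{1,-1}(x)=\frac{\cU(-p^{-1/2}x)}{\cU(-p^{*-1/2}x)}.
\end{equation*}
The crucial input is the identity $\cU(q^{N}z)=\cU(z)$. I would establish it by applying the quasi-periodicity $\theta_{q^{2N}}(q^{2N}w)=-\theta_{q^{2N}}(w)/w$ (and its rearrangement $\theta_{q^{2N}}(q^{-2N}w)=-q^{-2N}w\,\theta_{q^{2N}}(w)$) to each of the four theta factors in $\cU(q^{N}z)$, and verifying that the four prefactors in $q^{\pm 2}$ and $z^{\pm 2}$ produced at numerator and denominator cancel exactly. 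Combined with the surface identity $-p^{-1/2}=q^{N}(-p^{*-1/2})$, this forces $\mathcal{Y}_{1,-1}(x)\equiv 1$ \emph{identically} in $(p,q)$, with no further diophantine restriction. Applying the very same periodicity to the ratio $\cF_{-1}(z_{i}/w)/\cF^{*}_{-1}(z_{i}/w)$ that governs the $t L$-exchange \eqref{eq:exchtL} for $(m,n)=(1,-1)$ shows that each $t^{(k)}_{1,-1}(z)$ commutes with every $L(w)$, and hence lies in the centraliser (extended center) of $\gelpn$.

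For part $(ii)$, the first assertion is a direct instance of case $(c')$ of Theorem~\ref{thm:line} with $(m,n)=(1,-1)$: every non-empty intersection $\mathscr{S}_{1,-1}\cap\mathscr{S}_{m',n'}$ is automatically an abelianity line of $\mathscr{S}_{m',n'}$. For density, I would parametrise $\mathscr{S}_{1,-1}$ by the two free parameters $(p,q)$ (with $c=-N$ and $p^{*}=pq^{2N}$) and observe that the second surface equation $(-p^{1/2})^{m'}(-p^{*1/2})^{n'}=q^{-N}$ then collapses to the single one-parameter condition
\begin{equation*}
\bigl(-p^{1/2}\bigr)^{m'+n'}=q^{-N(n'+1)},
\end{equation*}
i.e.\ $-p^{1/2}=q^{\alpha(m',n')}$ with $\alpha(m',n')=-N(n'+1)/(m'+n')$ (the case $m'+n'=0$ being excluded by Lemma~\ref{lem:det} since $\left|\begin{smallmatrix}1 & m'\\-1 & n'\end{smallmatrix}\right|=m'+n'$). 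A short arithmetic check shows that as $(m',n')$ runs over integer pairs with $m'+n'\neq 0$, the rational $\alpha(m',n')$ hits every element of $\QQ$; the corresponding one-parameter family of curves therefore fills $\mathscr{S}_{1,-1}$ densely.

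The main obstacle is the periodicity identity $\cU(q^{N}z)=\cU(z)$: it is a careful bookkeeping exercise on the four $\theta$-anomalies, and one must use the correct version of the quasi-periodicity for each factor, since two arguments of $\cU(q^{N}z)$ are shifted by $q^{2N}$ and the other two by $q^{-2N}$. Once that identity is in hand, both parts of the proposition follow, either by combining it with the surface relation to trivialise the exchange functions, or by invoking Theorem~\ref{thm:line} together with the elementary density argument above.
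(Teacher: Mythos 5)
Your part (i) is correct and in fact more self-contained than the paper, which simply quotes \cite{AFR19} for this point: the periodicity $\cU\big(q^{N}z\big)=\cU(z)$ does hold (the four theta-function anomalies produced by $\theta_{q^{2N}}\big(q^{\pm 2N}w\big)$ cancel between numerator and denominator exactly as you describe), and combined with the surface identity $\big({-}p^{\frac12}\big)^{-1}=q^{N}\big({-}p^{*\frac12}\big)^{-1}$ on $\mathscr{S}_{1,-1}$ it trivialises both the $t$--$L$ exchange factor $\cF_{-1}/\cF^{*}_{-1}$ and $\mathcal{Y}_{1,-1}$, with $c=-N$ forced by the surface equation; no extra condition on $p$, $q$ appears. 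Your density argument for (ii) is essentially the paper's: on the intersection the line is fixed by the rational $\frac{\lambda}{m'}=\frac{n'+1}{m'+n'}$ (equivalently your exponent $\alpha(m',n')$), every rational value is attained (e.g., $n'=a-1$, $m'=d-a+1$), and $m'+n'=0$ is indeed excluded by Lemma~\ref{lem:det}.

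The genuine weak point is the first assertion of (ii), which you obtain by citing case $(c')$ of Theorem~\ref{thm:line}. Within the paper's logical architecture this is circular: the ``also a line of abelianity of $\mathscr{S}_{m',n'}$'' part of cases $(c)$/$(c')$ is precisely what Proposition~\ref{prop:Scrit}(ii) supplies, and the paper's proof of Theorem~\ref{thm:line} refers back to Proposition~\ref{prop:Scrit} for the $\mathscr{S}_{1,-1}$ situations rather than establishing this independently. The repair is short and is exactly what the paper does: apply Lemma~\ref{lem:abel} directly. From \eqref{eq:lambda-inter} with $(m,n)=(1,-1)$, the parameters of $\mathscr{S}_{m',n'}$ on the intersection are $\lambda=\frac{m'(n'+1)}{m'+n'}$ and $\lambda^{*}=\frac{n'(1-m')}{m'+n'}$, so that $\frac{\lambda}{m'}-\frac{\lambda^{*}}{n'}=1\in\ZZ$, and the denominator $d$ of the irreducible fraction $\frac{\lambda}{m'}=\frac{n'+1}{m'+n'}$ automatically divides $m'+n'$; hence condition~$(2)$ of Lemma~\ref{lem:abel} (or condition~$(1)$, i.e., Proposition~\ref{prop:ligne}, when $\lambda\in\ZZ$) holds for every admissible $(m',n')$, which gives the first assertion of (ii) without invoking Theorem~\ref{thm:line}. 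With that substitution your argument matches the intended proof; as written, the key step rests on a statement whose proof in the paper depends on the proposition itself.
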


\begin{Proposition}\label{prop:S0n}\quad
\begin{enumerate}\itemsep=0pt
\item[$(i)$] The generators $t_{0,n}^{(k)}(z)$ satisfy an abelian algebra on the whole surface $\mathscr{S}_{0,n}$.
\item[$(ii)$] There are a countable number of surfaces $\mathscr{S}_{m',n'}$ such that the intersection of $\mathscr{S}_{m',n'}$ with $\mathscr{S}_{0,n}$ is an abelianity line for $\mathscr{S}_{m',n'}$. These lines of abelianity form a dense set of lines on the surface $\mathscr{S}_{0,n}$.
\item[$(iii)$] There exist a countable number of surfaces $\mathscr{S}_{m',n'}$ such that the intersection of $\mathscr{S}_{m',n'}$ with~$\mathscr{S}_{0,n}$ is not an abelianity line for $\mathscr{S}_{m',n'}$.
\end{enumerate}
\end{Proposition}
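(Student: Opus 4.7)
The plan is to prove $(i)$ by direct evaluation of the structure function on $\mathscr{S}_{0,n}$ using a periodicity of $\cU$, and to deduce $(ii)$, $(iii)$ from Theorem~\ref{thm:line} applied with its two pairs swapped (so that the role of ``surface of interest'' in that theorem is played here by $\mathscr{S}_{m',n'}$). The crux of $(i)$ is the identity
\[
\cU\big(q^N z\big) = \cU(z),
\]
which I would first extract from \eqref{eq:defU} by applying $\theta_{q^{2N}}\big(q^{2N} w\big) = -\theta_{q^{2N}}(w)/w$ to each of the four theta factors: the $q$- and $z$-dependent prefactors that emerge combine to unity. Starting from the first equality in \eqref{eq:funcY} and using $\cF_0 = 1$, one has $\mathcal{Y}_{0,n}(x) = \cF_n^*(x)\cF_{-n}^*(x)$; expanding with $t := -p^{*\frac{1}{2}}$ gives, for $n>0$,
\[
\mathcal{Y}_{0,n}(x) = \frac{\prod_{\ell=0}^{n-1} \cU(t^\ell x)}{\prod_{\ell=1}^{n} \cU(t^{-\ell} x)}.
\]
The surface condition $t^n = q^{-N}$ together with the periodicity yields $\cU(t^{-\ell}x) = \cU(t^{n-\ell}x)$, so the substitution $\ell \mapsto n-\ell$ in the denominator turns it into $\prod_{k=0}^{n-1} \cU(t^k x)$, matching the numerator termwise. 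Hence $\mathcal{Y}_{0,n}(x) \equiv 1$ identically on $\mathscr{S}_{0,n}$ (the case $n<0$ is symmetric).

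For $(ii)$, Theorem~\ref{thm:line} with its two pairs interchanged yields the following sufficient conditions for $\mathscr{S}_{0,n} \cap \mathscr{S}_{m',n'}$ to be an abelianity line of $\mathscr{S}_{m',n'}$: $n \mid n'$ (clause $(a)$); $m'n \mid (m'+n'-n)$ with $m'+n' \neq 0$ (clause $(b)$); or $(m',n') = \pm(1,-1)$ (clause $(c')$, while clause $(c)$ is vacuous since $m=0$). Taking $n' = kn$ with $k \in \ZZ \setminus \{1\}$ and $m' \in \ZZ \setminus \{0\}$ gives a countable family of admissible surfaces; by Lemma~\ref{lem:det} the corresponding intersection line on $\mathscr{S}_{0,n}$ satisfies $-p^{\frac{1}{2}} = q^{N(k-1)/m'}$. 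Since $(k-1)/m'$ exhausts $\QQ^*$, which is dense in $\RR$, the associated family of lines is dense on $\mathscr{S}_{0,n}$.

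For $(iii)$ I propose the family $(m',n') = (-k,k)$ with $k \in \ZZ$, $|k| \geq 2$ and $n \nmid k$ (non-vacuous precisely when $|n| \geq 2$, which is the relevant range): clause $(a)$ fails since $n \nmid k$; clause $(b)$ fails because $m' + n' = 0$; and clauses $(c)$, $(c')$ are ruled out by $|k| \geq 2$ and $m=0$ respectively. Infinitely many such $k$ exist whenever $|n| \geq 2$, giving the required countable collection. The only genuine technical step in the whole argument is isolating the periodicity $\cU(q^N z) = \cU(z)$ used in $(i)$; once that is established, $(ii)$ and $(iii)$ are immediate consequences of Theorem~\ref{thm:line} together with elementary divisibility reasoning.
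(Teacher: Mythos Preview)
Your proposal is correct. Parts $(i)$ and $(ii)$ follow essentially the same route as the paper: for $(i)$, both arguments reduce $\mathcal{Y}_{0,n}$ to the displayed ratio and cancel it termwise via the $q^N$-periodicity of $\cU$ together with the surface condition $t^n=q^{-N}$; for $(ii)$, both pick $n'=kn$ to realize clause $(a)$ of Theorem~\ref{thm:line} and observe that $(k-1)/m'$ exhausts the rationals, giving density on $\mathscr{S}_{0,n}$.

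Part $(iii)$ genuinely differs. The paper takes $(m',n') = \big((1-k)n,\,kn+1\big)$ and checks directly, via Lemma~\ref{lem:abel}, that neither $\lambda = 1-k-\tfrac1n$ nor $\tfrac{\lambda}{m'}-\tfrac{\lambda^*}{n'} = \tfrac{1}{(1-k)n^2}$ is an integer. You instead take $(m',n') = (-k,k)$ and argue through the clauses of Theorem~\ref{thm:line}: the choice $m'+n'=0$ disposes of clause $(b)$ for free, $n\nmid k$ kills clause $(a)$, and $|k|\ge 2$ excludes $(c')$. This is a cleaner counterexample family. Both constructions need $|n|\ge 2$; in fact for $|n|=1$ clause $(a)$ is automatically satisfied (since $n'/n\in\ZZ$ trivially), so statement $(iii)$ actually fails in that degenerate case. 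You flag this restriction explicitly, whereas the paper's own proof tacitly relies on it.
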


\begin{Proposition}\label{prop:Smmm} \quad
\begin{enumerate}\itemsep=0pt
\item[$(i)$] The generators $t_{m,-m}^{(k)}(z)$ commute with the Lax operator $L(w)$
 of the elliptic quantum algebra $\gelpn$ on the surface $\mathscr{S}_{m,-m}$ $($``localized extended center''$)$ when
 the following conditions hold, see~\eqref{eq:paramppstar}:
\begin{enumerate}\itemsep=0pt
\item[$1)$] $m$ is odd;
\item[$2)$] $m$ and $\lambda$ are coprime integers;
\item[$3)$] $m$ and $\beta'_0+1$ are coprime integers, where $\beta_0$ and $\beta'_0$ are the B\'ezout coefficients such that $\beta_0 m - \beta'_0 \lambda=1$ with $1 \le \beta'_0 \le m-1$.
\end{enumerate}
Such a submanifold of the surface $\mathscr{S}_{m,-m}$ will be called a ``super-abelianity line''.
\item[$(ii)$] The intersection $\mathscr{S}_{m,-m} \cap \mathscr{S}_{1,1}$ with $m$ odd is a super-abelianity line for $\mathscr{S}_{m,-m}$ but it is never an abelianity line for $\mathscr{S}_{1,1}$.
\end{enumerate}
\end{Proposition}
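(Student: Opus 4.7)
The plan is to attack (i) by reducing the commutation $[t^{(k)}_{m,-m}(z),L(w)]=0$ through \eqref{eq:exchtL} with $n=-m$ to the single scalar identity $\cF_{-m}(x)=\cF^*_{-m}(x)$. Using the parametrization \eqref{eq:paramppstar} and the evenness of $\cU$ in its argument, I would rewrite
\[
\cF_{-m}(x)=\prod_{\ell=1}^{m}\cU\big(q^{N\ell\lambda/m}\,x\big)^{-1},\qquad
\cF^*_{-m}(x)=\prod_{\ell=1}^{m}\cU\big(q^{-N\ell(1-\lambda)/m}\,x\big)^{-1},
\]
so that the problem becomes one of matching two $m$-fold products of $\cU$-factors. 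The crucial analytic tool, extracted directly from the theta identities stated after \eqref{eq:defU}, is the quasi-periodicity $\cU(q^{N}z)=\cU(z)$; this lets me replace each exponent $\ell\lambda/m$ (resp.\ $-\ell(1-\lambda)/m$) by its representative modulo $1$ with denominator $m$.

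Under condition 2, the map $\ell\mapsto\ell\lambda\bmod m$ is a bijection $\{1,\dots,m\}\to\{0,\dots,m-1\}$, so $\cF_{-m}(x)=\prod_{r=0}^{m-1}\cU(q^{Nr/m}x)^{-1}$. The conceptual core of the argument will be to show that condition 3 is equivalent to $\gcd(m,1-\lambda)=1$: reducing $\beta_0 m-\beta'_0\lambda=1$ modulo $m$ gives $\beta'_0\equiv-\lambda^{-1}\pmod m$, so $\beta'_0+1\equiv -\lambda^{-1}(\lambda-1)\pmod m$, whence $\gcd(m,\beta'_0+1)=\gcd(m,\lambda-1)$. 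With this equivalence the same bijection argument applies to the starred product and the two $\cF$'s collapse to the same expression, yielding $\cF_{-m}=\cF^*_{-m}$. I would then remark that condition 1 is automatic from conditions 2 and 3: if $m$ were even, coprimality with $\lambda$ would force $\lambda$ odd, hence $1-\lambda$ even, violating $\gcd(m,1-\lambda)=1$.

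For (ii), Lemma~\ref{lem:det} guarantees that $\mathscr{S}_{m,-m}\cap\mathscr{S}_{1,1}$ is non-empty exactly when $m\ne\pm 1$, and formula \eqref{eq:lambda-inter} evaluated on $\mathscr{S}_{m,-m}$ yields $\lambda=(m+1)/2$, $\lambda^*=(1-m)/2$, both non-zero integers for $m$ odd with $|m|\ge 3$. Conditions 2 and 3 follow at once from $\gcd\big(m,(m\pm1)/2\big)=1$, using that $m$ is odd and coprime to $2$, so (i) applies and the intersection is a super-abelianity line of $\mathscr{S}_{m,-m}$. To rule out abelianity on $\mathscr{S}_{1,1}$ I would apply Theorem~\ref{thm:line} with $(m,n,m',n')=(1,1,m,-m)$: case~(a) demands $(m+1)/(2m)\in\ZZ$, which is false for $|m|\ge 2$; case~(b) requires both $1/m\in\ZZ$ and $m'+n'\ne 0$, and the latter fails since $m'+n'=m-m=0$; cases~(c) and~(c') would force $m=\pm 1$, excluded. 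No case applies, so the intersection is never an abelianity line on $\mathscr{S}_{1,1}$.

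The main obstacle I anticipate is the arithmetic reinterpretation of condition 3 as the clean gcd condition $\gcd(m,1-\lambda)=1$; once it is established, the rest amounts to a careful but routine derivation of the $\cU$ quasi-periodicity via the theta identities and a straightforward check of the cases of Theorem~\ref{thm:line} in part~(ii).
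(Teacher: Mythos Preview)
Your proposal is correct. For part~(i) you take a genuinely different and more direct route than the paper: instead of seeking an explicit permutation $\sigma\in\mathfrak{S}_m$ matching the numerator and denominator term by term, and then deriving the B\'ezout-type constraints on $\sigma(k)$ recursively (as the paper does via equations of the form $\lambda(k-\sigma(k))=k-m\ell(k)$), you reinterpret condition~3 arithmetically as $\gcd(m,1-\lambda)=1$ through the congruence $\beta'_0+1\equiv -\lambda^{-1}(1-\lambda)\pmod m$, and then observe that the $q^N$-periodicity of~$\cU$ collapses each of $\cF_{-m}$ and $\cF^*_{-m}$ to the common product $\prod_{r=0}^{m-1}\cU(q^{Nr/m}x)^{-1}$ by a simple bijection argument on residues. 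This is shorter and, as a bonus, makes transparent that condition~1 is implied by conditions~2 and~3. The paper's approach, by contrast, derives all three conditions constructively as necessary within the $\cU$-matching framework, which explains their form but at greater length. For part~(ii) the two arguments are essentially the same: the paper checks the B\'ezout coefficients $\beta'_0=m-2$ directly and appeals to Lemma~\ref{lem:abel} to exclude abelianity on $\mathscr{S}_{1,1}$, while you use your gcd reformulation for the first half and invoke the equivalent Theorem~\ref{thm:line} for the second.
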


The proof of these propositions is postponed to Section~\ref{sect:proofs}.

\begin{Remark}
The construction can be extended to the case of the algebra $\ellipt{N}$ defined with the unitary elliptic $R$-matrix \cite{AFR17,AFR19}. However the fact that the $R$-matrix is normalized in a different way modifies the analysis of the abelianity conditions. While it is easy to see that condition~(a) of Theorem~\ref{thm:line} still applies, the analysis of the other conditions is much more involved and remains to be done. It would be interesting since for~$\ellipt{N}$ the surface $\mathscr{S}_{2,-1}$ corresponds precisely to the original $q$-deformed $W$-algebras introduced in \cite{FR1, FR}.
\end{Remark}

\section{Abelianity lines}

We first introduce the following lemma which classifies the different lines of abelianity. This lemma is essential in proving the results exposed in Section~\ref{sect:main}.
\begin{Lemma}\label{lem:abel}
On the surface $\mathscr{S}_{m,n}$, the elliptic nomes $p$ and $p^*$ being parametrized as in~\eqref{eq:paramppstar} and the central charge given by \eqref{eq:cc},
the generators $t_{m,n}^{(k)}(z)$ realize an abelian subalgebra in $\mathcal{A}_{q,p}\big(\widehat{\mathfrak{gl}}(N)_{c}\big)$ if one of the following conditions is satisfied:{\samepage
\begin{enumerate}\itemsep=0pt
\item[$1.$] The parameters $\lambda$ and $\lambda^*$ take integer values.
\item[$2.$] The parameters $\lambda$ and $\lambda^*$ obey the relations
 $\frac{\lambda}{m}-\frac{\lambda^*}{n} \in \ZZ$ and $\frac{m+n}d\in\ZZ$, where $d$ is the denominator of the irreducible fraction of $\frac{\lambda}{m}$ $($or equivalently $\frac{\lambda^*}{n})$.
\end{enumerate}
When $N>2$, these conditions are necessary and sufficient conditions.}
\end{Lemma}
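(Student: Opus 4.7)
The plan is to translate the abelianity condition $\mathcal{Y}_{m,n}(x) \equiv 1$ into a combinatorial identity. Substituting the parametrization \eqref{eq:paramppstar} into \eqref{eq:funcY}, every factor becomes $\mathcal{U}(q^{N\alpha}x)$ for some $\alpha \in \{\pm\lambda\ell/m,\pm\lambda^*\ell/n\}$. A direct computation with the quasi-periodicity $\theta_{q^{2N}}(q^{2N}w) = -w^{-1}\theta_{q^{2N}}(w)$ yields the ``remarkable periodicity'' $\mathcal{U}(q^N z) = \mathcal{U}(z)$ advertised in the introduction, so each factor depends only on $\alpha \bmod 1$. Hence $\mathcal{Y}_{m,n}(x) \equiv 1$ will follow as soon as the multisets of $\alpha$-exponents (modulo~$1$) appearing in numerator and denominator of \eqref{eq:funcY} coincide.

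Condition~1 is then immediate and reproves Proposition~\ref{prop:ligne}: setting $v(\ell) := \mathcal{U}(q^{N\lambda\ell/m}x)$, the substitution $\ell \mapsto m-\ell$ gives $v(m-\ell)=v(-\ell)$ whenever $\lambda\in\ZZ$ (applying the $q^N$-periodicity $\lambda$ times), so $\prod_{\ell=1}^{m-1}v(\ell)=\prod_{\ell=1}^{m-1}v(-\ell)$ and the $\lambda$-factors in \eqref{eq:funcY} collapse to the single remnant $v(m)=\mathcal{U}(x)$ in the numerator; the mirror manipulation on the $\lambda^*$-factors produces a remnant $\mathcal{U}(x)$ in the denominator, which cancels it. For Condition~2 the key observations are: (i)~$k:=\lambda/m-\lambda^*/n\in\ZZ$ combined with $\lambda+\lambda^*=1$ yields $\lambda(m+n)/m=1+kn\in\ZZ$; (ii)~the $q^N$-periodicity of $\mathcal{U}$ then delivers both $\mathcal{U}(q^{\pm N\lambda^*\ell/n}x)=\mathcal{U}(q^{\pm N\lambda\ell/m}x)$ and the master periodicity $v(\ell+m+n)=v(\ell)$, so $v(-\ell)=v(m+n-\ell)$. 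After rewriting \eqref{eq:funcY} entirely in terms of $v$, the ratio for $m>n$ simplifies to $\prod_{\ell=n+1}^m v(\ell)\big/\prod_{\ell=n}^{m-1}v(-\ell)$, and the reflection $\ell\mapsto m+n-\ell$ bijects the denominator range $\{n,\dots,m-1\}$ onto the numerator range $\{n+1,\dots,m\}$, killing the ratio. The cases $m<n$ and $m=n$ follow symmetrically, and the ancillary condition $(m+n)/d\in\ZZ$ is automatic since $\lambda/m=(1+kn)/(m+n)$ already exhibits a denominator dividing $m+n$.

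The main obstacle is the necessity direction for $N>2$. Here one views $\mathcal{Y}_{m,n}(x)$ as a meromorphic function of $x$: each $\mathcal{U}(q^{N\alpha}x)$ has zeros and poles at a prescribed translate of the $q$-lattice, and for $N>2$ these translates are disjoint for distinct classes $\alpha\bmod 1$, so no accidental cancellation is possible and $\mathcal{Y}_{m,n}\equiv 1$ is genuinely equivalent to the multiset identity of the preceding paragraph. The remaining step, a purely number-theoretic classification, consists in matching the two multisets under $\lambda+\lambda^*=1$. The singleton elements $\lambda\bmod 1$ and $\lambda^*\bmod 1$ -- which appear on only one side of the identity, as the $\ell=m$ and $\ell=n$ contributions respectively -- must be accounted for across the equality; this dichotomy either forces $\lambda,\lambda^*\in\ZZ$ (Condition~1) or pins down $\lambda/m\equiv\lambda^*/n\pmod 1$ together with the $(m+n)$-compatibility on the denominator $d$ of $\lambda/m$ (Condition~2), via a case split on the $\gcd$'s of the irreducible denominators with $m$ and~$n$.
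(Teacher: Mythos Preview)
Your sufficiency arguments are correct and, for Condition~2, more direct than the paper's approach. The paper never verifies sufficiency separately; it derives Conditions~1 and~2 as the only possible $\mathcal{U}$-level cancellation patterns and obtains both directions at once. Your reflection $\ell\mapsto m+n-\ell$ combined with the period $v(\ell+m+n)=v(\ell)$ is a clean shortcut, and your remark that $(m+n)/d\in\ZZ$ is automatic from $\lambda/m=(1+kn)/(m+n)$ is correct. Two caveats: the product bounds in~\eqref{eq:funcY} are $|m|$ and $|n|$, so negative $m$ or $n$ needs a word; and $m+n=0$ must be set aside (it falls under Proposition~\ref{prop:Scrit}).

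The necessity direction has a genuine gap. Your ``singleton'' heuristic is wrong: the class $\lambda\bmod 1$ is not in general a singleton in the numerator multiset (e.g.\ take $m=6$, $\lambda=2$, so $\lambda\ell/m=\ell/3$ hits $0\bmod 1$ at both $\ell=3$ and $\ell=6$), so one cannot simply trace a distinguished element across the identity. The paper's proof is precisely the ``purely number-theoretic classification'' that you defer to one sentence: it reduces each product via Euclidean division of $|m|$ by $d$ and $|n|$ by $d'$ to residual products of lengths $\bar\mu=\min(\mu,d-\mu)$ and $\bar\mu'$, shows that no further in-column cancellation is possible (forcing $\bar\mu=\bar\mu'$ and then $d=d'$), and then proves via an inductive bound $|\sigma(i)-i\sigma(1)|<d$ that the only permutation $\sigma\in\mathfrak{S}_{\bar\mu}$ achieving the cross-cancellation $aj\equiv b\sigma(j)\pmod d$ for all $j$ is the identity, yielding $(a-b)/d\in\ZZ$. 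A further case split (cases~(I) and~(II)) combined with the surface condition then forces $d\mid(m+n)$. This combinatorial core is entirely absent from your sketch. (On your reduction to a multiset identity for $N>2$: the paper does not prove this either---it simply \emph{declares} a framework in which only whole-$\mathcal{U}$ cancellations are considered---so both arguments rest on the same working hypothesis here.)
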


The proof of this lemma is given in Section~\ref{proof:abel}. Remark that the first part of condition 2 amounts to say that the central charge is an integer proportional to the critical value $-N$, see equation~\eqref{eq:cc}.
For completeness, we solved the condition $(2)$ of Lemma~\ref{lem:abel}:
\begin{Lemma}\label{lem:abel2}
Let $\ell$ and $\ell'$ be B\'ezout coefficients satisfying $\ell m + \ell' n = g$, where $g=\gcd(m,n)$.
Then, the parameters $\lambda$ and $\lambda^*$ solutions to the condition $(2)$ of Lemma~{\rm \ref{lem:abel}} are given by
\begin{equation*}
\frac{\lambda}{m} = \gamma'\ell + \frac{\gamma}{d} + \frac{kn}{g}
\qquad \text{and} \qquad
\frac{\lambda^*}{n} = \gamma'\ell' + \frac{\gamma}{d} - \frac{km}{g} ,\qquad k\in\ZZ,
\end{equation*}
where $d$ is a divisor of $m+n$ such that $\frac{m+n}d$ is coprime with $g$, and
$(\gamma',\gamma)$ are the B\'ezout coefficients solution of $\gamma' g + \gamma \frac{m+n}{d} = 1$. The integer $\gamma$ has to be coprime with $d$ and such that $0 < \gamma < d$.
\end{Lemma}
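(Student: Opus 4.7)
The plan is to prove the lemma by establishing a bijection between admissible triples $(d,\gamma,k)$ and pairs $(\lambda,\lambda^*)$ realising condition~(2) of Lemma~\ref{lem:abel} on $\mathscr{S}_{m,n}$. I split the argument into a forward verification (the formulas yield valid solutions) and a reverse reconstruction (every valid solution arises uniquely from such a triple).

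The forward direction is a direct substitution. The sum collapses to $\lambda+\lambda^* = \gamma'(m\ell+n\ell') + \gamma(m+n)/d = \gamma'g + \gamma(m+n)/d = 1$ by the two defining B\'ezout identities, so the surface condition $\lambda+\lambda^*=1$ holds. The difference $\frac{\lambda}{m}-\frac{\lambda^*}{n} = \gamma'(\ell-\ell') + k(m+n)/g$ is an integer since $g\mid m+n$. Moreover $\frac{\lambda}{m} = \bigl(\gamma'\ell + kn/g\bigr) + \gamma/d$, where the parenthesized term is integer, so the fractional part of $\lambda/m$ is $\gamma/d$ in lowest terms (by $\gcd(\gamma,d)=1$); the denominator is thus exactly $d$, which divides $m+n$ by hypothesis. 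This verifies condition~(2).

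The reverse direction begins by writing $\lambda/m = a/d$ and $\lambda^*/n = b/d$ in lowest terms, with the same $d$ because $(a-b)/d\in\ZZ$ and $\gcd(a,d)=1$. The identity $ma+nb=d$ forces $g\mid d$. Setting $K=(a-b)/d$ and eliminating $b$ produces the B\'ezout-type relation $fa - nK = 1$, where $f=(m+n)/d$; this gives $\gcd(f,n)=1$, and hence $\gcd(f,g)=1$, which is precisely the solvability condition for the B\'ezout pair $(\gamma',\gamma)$. One then sets $\gamma := a\bmod d$, so $0<\gamma<d$ and $\gcd(\gamma,d)=1$; since $g\mid d$, the congruence $fa\equiv 1\pmod{g}$ descends to $f\gamma\equiv 1\pmod{g}$, so the integer $\gamma'$ defined by $\gamma'g+\gamma f=1$ exists and is uniquely determined.

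The main obstacle is the integrality of $k$. Writing $a=\gamma+dj$ with $j\in\ZZ$ and defining $k := (j-\gamma'\ell)/n_1$ where $n_1=n/g$, one must show $n_1\mid j-\gamma'\ell$. Substituting $f\gamma=1-\gamma'g$ into $fa=1+nK$ produces the key identity $(m_1+n_1)j - n_1K = \gamma'$ with $m_1=m/g$. Reducing modulo $n_1$ gives $m_1 j \equiv \gamma'\pmod{n_1}$. On the other hand, dividing the B\'ezout hypothesis $\ell m+\ell' n=g$ by $g$ gives $\ell m_1+\ell' n_1=1$, i.e., $\ell m_1\equiv 1\pmod{n_1}$. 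Multiplying the first congruence by $\ell$ then yields $j\equiv\gamma'\ell\pmod{n_1}$, establishing integrality of $k$. The companion formula for $\lambda^*/n$ is then forced by the already verified identity $\lambda+\lambda^*=1$, and uniqueness of the triple $(d,\gamma,k)$ is built into the normalisation $0<\gamma<d$ together with the fact that $\gamma'$ is pinned down by $\gamma$.
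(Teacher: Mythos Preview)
Your argument is correct. The paper's proof takes a different organisational route: it starts from the matching analysis already carried out in the proof of Lemma~\ref{lem:abel} (writing $a=\bar{a}d+\gamma$, $b=\bar{b}d+\gamma$ and invoking the case~(I) decomposition~\eqref{eq:caseI}), uses the surface condition~\eqref{eq:surf2} to extract the relation $\gamma'g+\gamma(m+n)/d=1$ with $\gamma'=\bar{a}\bar{m}+\bar{b}\bar{n}$, and then parametrises the linear Diophantine equation $\gamma'=\bar{a}\bar{m}+\bar{b}\bar{n}$ directly as $\bar{a}=\gamma'\ell+k\bar{n}$, $\bar{b}=\gamma'\ell'-k\bar{m}$, so that the integrality of $k$ comes for free. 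You instead set the lemma up as a self-contained bijection: the forward verification by substitution is an addition the paper omits, and in the reverse direction you reconstruct the parameters from a given solution, deriving the key relation in the form $fa-nK=1$ (with $f=(m+n)/d$ and $K=(a-b)/d$) rather than importing it from the machinery of Lemma~\ref{lem:abel}. Your congruence argument $m_1 j\equiv\gamma'\pmod{n_1}$, combined with $\ell m_1\equiv 1\pmod{n_1}$, is the explicit counterpart of what the paper obtains by the standard parametric solution of the Diophantine equation. The underlying arithmetic is identical; your version is more self-contained and checks both directions explicitly, while the paper's is shorter because it reuses infrastructure already in place.
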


The proof of this lemma is given in Section~\ref{proof:abel2}.

\subsection{Proof of Lemma \ref{lem:abel}\label{proof:abel}}

Before going into the details, let us stress that we restrict ourselves to a framework where
cancellations between the numerator and the denominator of the exchange function
$\mathcal{Y}_{m,n}$ are done through the functions $\cU$ as a whole. Indeed, due to the explicit form of $\cU$, it seems very hard to obtain cancellation in a different way, even when dealing with some ``magic'' simplifications among elliptic functions. Remark however that for $N=2$, these 'magic' simplifications may occur, because the shift~$q^2$ in the definition of $\cU$, see~\eqref{eq:defU}, coincide with $q^N$, the half-period of the $\theta$ functions. Thus, the proof done here is only a sufficient condition when $N=2$.

We start with the expression \eqref{eq:funcY} of the function $\mathcal{Y}_{m,n}$ with the parametrization \eqref{eq:paramppstar}, where~$\lambda$,~$\lambda^*$ at this stage are complex numbers. The conditions for which $\mathcal{Y}_{m,n}(x)=1$ are determined by looking at the different ways the functions $\cU$ entering in~\eqref{eq:funcY} may simplify each other.
The simplification of the $\cU$ functions can be done essentially in two distinct ways:
\begin{enumerate}\itemsep=0pt
\item[(1)]
The functions $\mathcal{U}\big(\big({-}p^{\frac{1}{2}}\big)^{\ell} x\big)$ in the numerator cancel the functions $\mathcal{U}\big(\big({-}p^{\frac{1}{2}}\big)^{\ell'} x\big)$ in the denominator, and similarly for the functions $\mathcal{U}\big(\big({-}p^{*\frac{1}{2}}\big)^{\ell} x\big)$. In other words, one assumes that no ``cross-cancellations'' occur between $p$ and $p^*$ shifted-terms.
This case corresponds to the study done in \cite{AFR17,AFR19} and reminded in Proposition~\ref{prop:ligne}.
\item[(2)]
There exist at least one pair $(\ell,\ell')$ of indices such that the function $\mathcal{U}\big(\big({-}p^{\frac{1}{2}}\big)^{\ell} x\big)$ simplifies with the function $\mathcal{U}\big(\big({-}p^{*\frac{1}{2}}\big)^{\ell'} x\big)$. Taking into account the $q^N$-periodicity of the func\-tion~$\cU(x)$, this leads to
\begin{equation*}
\frac{\lambda\ell}{m} - \frac{\lambda^*\ell'}{n} \in \ZZ.
\end{equation*}
Since the surface condition reads $\lambda+\lambda^*=1$, this last equation implies $\lambda,\lambda^*\in\QQ$.
\end{enumerate}
We focus here on case (2), since case (1) was dealt with in \cite{AFR17}.
The parameters $\lambda$, $\lambda^*$ being then rational numbers, one sets
\begin{equation*}%\label{eq:lambda/m}
\frac{\lambda}{m}=\frac{a}{d}\qquad \text{and}\qquad \frac{\lambda^*}{n}=\frac{b}{d'} ,
\end{equation*}
 where $(a,d)$ and $(b,d')$ are pairs of coprime numbers with $d,d'>0$.

\textbf{Step 1:} Writing $|m|=ds+\mu$ and $|n|=d's'+\mu'$ with $0<\mu<d$ and $0<\mu'<d'$, allows one to obtain a first simplification of the function $\mathcal{Y}_{m,n}(x)$ (note that the values $\mu=0$, $\mu'=0$, lead to $\lambda,\lambda^*\in\ZZ$ which returns to case~1).
Using the $q^N$-periodicity of the function $\cU$, one gets
\begin{equation*}
\frac{\displaystyle \prod_{\ell=1}^{|m|} \mathcal{U}\big(\big({-}p^{\frac{1}{2}}\big)^{-\ell} x\big)}
{\displaystyle \prod_{\ell=1}^{|m|-1} \mathcal{U}\big(\big({-}p^{\frac{1}{2}}\big)^{\ell} x\big)} =
\frac{\displaystyle \prod_{j=1}^{\mu} \mathcal{U}\big(q^{Naj/d} x\big)}
{\displaystyle \prod_{j'=d-\mu+1}^{d-1} \mathcal{U}\big(q^{Naj'/d} x\big)} .
\end{equation*}
Depending whether $\mu \le d-\mu$ or $\mu \ge d-\mu$, additional simplications may occur. In any case, introducing $\bar\mu = \inf(\mu,d-\mu) \le d/2$, one checks that
\begin{equation}\label{eq:Ymnreduc}
\frac{\displaystyle \prod_{j=1}^{\mu} \mathcal{U}\big(q^{Naj/d} x\big)}
{\displaystyle \prod_{j'=d-\mu+1}^{d-1} \mathcal{U}\big(q^{Naj'/d} x\big)} =
\frac{\displaystyle \prod_{j=1}^{\bar\mu} \mathcal{U}\big(q^{Naj/d} x\big)}
{\displaystyle \prod_{j'=d-\bar\mu+1}^{d-1} \mathcal{U}\big(q^{Naj'/d} x\big)} .
\end{equation}
Noting that the maximal value of $|j-j'|$ is $d-2$ and the minimal value is 1 in the r.h.s.\ of equation \eqref{eq:Ymnreduc}, $a(j-j')/d$ cannot be an integer, hence no further simplification occurs in the r.h.s.\ of equation \eqref{eq:Ymnreduc}.

The product of the $\mathcal{U}\big(\big({-}p^{*\frac{1}{2}}\big)^{-\ell} x\big)$ functions is processed in a similar way. Therefore, one obtains
\begin{equation}\label{eq:Ymnreductot}
\mathcal{Y}_{m,n}(x) = \frac{\displaystyle \prod_{j=1}^{\bar\mu} \mathcal{U}\big(q^{Naj/d} x\big)}
{\displaystyle \prod_{j=d-\bar\mu+1}^{d-1} \mathcal{U}\big(q^{Naj/d} x\big)}
\frac{\displaystyle \prod_{j'=d-\bar\mu'+1}^{d-1} \mathcal{U}\big(q^{Nbj'/d'} x\big)}
{\displaystyle \prod_{j'=1}^{\bar\mu'} \mathcal{U}\big(q^{Nbj'/d'} x\big)} .
\end{equation}
No further simplification can occur ``vertically'', hence it is necessary, for the function $\mathcal{Y}_{m,n}(x)$ to be equal to one, that the product $\prod\limits_{j=1}^{\bar\mu} \mathcal{U}\big(q^{Naj/d} x\big)$ simplifies the product $\prod\limits_{j'=1}^{\bar\mu'} \mathcal{U}\big(q^{Nbj'/d'} x\big)$, which implies $\bar\mu = \bar\mu'$.

Let $d=\alpha\delta$, $d'=\alpha\delta'$ where $\delta,\delta'$ are coprimes numbers. The term $\mathcal{U}\big(q^{Na/d} x\big)$ has to simplify some term $\mathcal{U}\big(q^{Nbk/d'} x\big)$, one can write $z=a/d-bk/d' \in \ZZ$ for a certain $1 \le k \le \bar\mu$, hence $a\delta'=\delta(bk+\alpha z\delta')$.
This last equation implies $\delta=1$ since $(a,\delta)$ and $(\delta',\delta)$ are pairs of coprimes numbers. Similarly, $z'=b/d'-ak'/d \in \ZZ$ for a certain $1 \le k' \le \bar\mu$, which leads to $\delta'=1$. One concludes that $d=d'$.

\textbf{Step 2:} We look now at the possible cross-simplifications in \eqref{eq:Ymnreductot} (``matching condition'').
A~simplification occurs whenever the argument of a $\cU$ function in the upper left product matches the argument of a $\cU$ function in the lower right product up to a power of $q^N$ due to the $q^N$-periodicity of $\cU$.
This amounts to determine the possible permutations $\sigma \in \mathfrak{S}_{\mu}$ such that
\begin{equation}\label{eq:matching}
\frac{aj}{d} - \frac{b\sigma(j)}{d} \in \ZZ , \qquad \forall j = 1, \dots, \bar\mu.
\end{equation}
Imposing \eqref{eq:matching} for $j=i$ and $j=1$ implies $ ( \sigma(i)-i\sigma(1) ) \frac{b}{d} \in \ZZ$.
Hence, for all $i=2,\dots,\bar\mu$, $d$ is a divisor of $\sigma(i)-i\sigma(1)$ since $b$ and $d$ are coprime integers.
Suppose that, for some $i$, $\sigma(i)-i\sigma(1)=0$.
One has $1 \le \sigma(j) \le \bar\mu$ for all $j$, hence one gets $|\sigma(i+1)-(i+1)\sigma(1)| = |\sigma(i+1)-\sigma(i)-\sigma(1)| < 2\bar\mu \le d$.
Therefore $d$ cannot divide $\sigma(i+1)-(i+1)\sigma(1)$ unless $\sigma(i+1)-(i+1)\sigma(1)=0$.
Considering now the case $i=2$, one has $|\sigma(2)-2\sigma(1)| < 2\bar\mu \le d$, hence $d$ cannot divide $\sigma(2)-2\sigma(1)$ unless $\sigma(2)-2\sigma(1)=0$.
Finally, taking $i=\bar\mu$, the recurrence leads to $\sigma(\bar\mu)-\bar\mu\sigma(1)=0$. Since $\sigma(\bar\mu) \le \bar\mu$, one has $\sigma(1)=1$.
It follows that the only possible choice of $\sigma$ is $\sigma(j)=j$ for all $j = 1, \dots, \bar\mu$, i.e., the identity.

The only consistent matching condition is thus simply $\frac{a-b}{d}\in\ZZ$,
complementing the relations $d=d'$ and $\bar\mu=\bar\mu'$ found at step~1.

Let us now further our analysis of these abelianity conditions.
The results obtained in step~1 show that one has to deal with two cases, depending on the relative positions of $\mu$ and $d-\mu$ and of $\mu'$ and $d-\mu'$ on the one hand, and on the signs of the two integers $m$ and $n$ on the other hand, namely:
\begin{align}
\label{eq:caseI}
\text{case (I)} &\quad m = d\bar s \pm \bar\mu \qquad \text{and} \qquad n = d\bar s' \mp \bar\mu
 (\bar s,\bar s' \in \ZZ) \quad \Rightarrow \quad \frac{m+n}{d} \in \ZZ, \\
\label{eq:caseII}
\text{case (II)} &\quad m = d\bar s \pm \bar\mu \qquad \text{and} \qquad n = d\bar s' \pm \bar\mu
 (\bar s,\bar s' \in \ZZ) \quad \Rightarrow \quad \frac{m-n}{d} \in \ZZ.
\end{align}
From the matching condition, one can set $a=\bar ad+\gamma$ and $b=\bar bd+\gamma$ where $\bar a,\bar b \in \ZZ$, $0 < \gamma < d$ and $(\gamma,d)$ are coprime integers (the last two conditions ensure that $(a,d)$ and $(b,d)$ are pairs of coprime integers). The surface condition $\lambda+\lambda^*=1$ then takes the form
\begin{equation}\label{eq:surf2}
\lambda+\lambda^* = \bar a m + \bar b n + \frac{\gamma}{d} (m+n) = 1.
\end{equation}
When $m+n = 0$, equation \eqref{eq:surf2} reads $(\bar a- \bar b)m=1$, which cannot be satisfied except in the very particular cases
$m=-n= \pm 1$. These cases are dealt with by Proposition~\ref{prop:Scrit} and correspond to an extended centrality condition.

We now consider $m+n \neq 0$. The above hypotheses imply that $d$ is a divisor of $m+n$, irrespective to case~(I) or~(II).

In case (II), $d$ is a divisor of both $m+n$ and $m-n$, it is therefore a divisor of $2m$, hence $2\bar\mu/d$ is an integer. The upper bound $\bar\mu \le d$ then implies $\bar\mu = d/2$, which shows that $d$ should be even in that case. Then, $n = d\bar s' \pm d/2$ can be rewritten as $n = d(\bar s'\pm1) \mp d/2$, and the case~(II) appears as a subcase of case~(I).

Thus, we have proved that when $\lambda, \lambda^*$ are not integers, the abelianity property is equivalent to condition 2. This ends the proof of Lemma \ref{lem:abel}.

\subsection{Proof of Lemma \ref{lem:abel2}\label{proof:abel2}}
 Let us now work out the condition~(2) given in Lemma~\ref{lem:abel}.

Let $g=\gcd(m,n)$, $m=\bar m g$, $n=\bar n g$, where $\bar m,\bar n$ are coprime numbers, and similarly, set $u = \gcd(g,d)$, $d = \bar d u$ with $g = \bar g u$. Equation \eqref{eq:surf2} then writes
\begin{equation}\label{eq:surf3}
\lambda+\lambda^* = \bar g \left( u\bar a \bar m + u\bar b \bar n + \frac{\gamma}{\bar d} (\bar m+\bar n) \right) = 1,
\end{equation}
which implies $\bar g=1$ (note that $\bar d$ is a divisor of $\bar m + \bar n$ since $d$ is a divisor of $m+n$ and $\bar d$, $\bar g$ are coprime numbers).
Hence, recalling~\eqref{eq:caseI}, equation \eqref{eq:surf3} takes the form
$g(\bar a \bar m + \bar b \bar n) + \gamma (\bar s+\bar s') \allowbreak = 1$,
showing that $g$ and $\bar s+\bar s'$ shall be coprime numbers with B\'ezout coefficients $\gamma' = \bar a \bar m + \bar b \bar n$ and $\gamma$.

The equation $\gamma' g + \gamma(\bar s+\bar s')=1$ is therefore a constraint equation.
Different cases may arise:
\begin{enumerate}\itemsep=0pt
\item[$i)$] The integers $g=\gcd(m,n)$ and $\bar s+\bar s'=(m+n)/d$ are \emph{not} coprime numbers. There is no solution, in other words no ``cross-cancellation'' can occur.
\item[$ii)$] The integers $g=\gcd(m,n)$ and $\bar s+\bar s'=(m+n)/d$ are coprime numbers.
\end{enumerate}
In that case, note that one can write in general
\begin{equation*}
m+n = g \prod_i d_i \prod_j d'_j,
\end{equation*}
where the $d_i$'s and $d'_j$'s are prime integers, the $d_i$'s are divisors of~$g$, and the $d'_j$'s are coprimes with $g$. Then the admissible divisors of $m+n$, i.e., the divisors $d$ such that $g=\gcd(m,n)$ and $(m+n)/d$ are coprime numbers, are of the form
\begin{equation*}
d = g \prod_i d_i \prod_{j'} d'_{j'},
\end{equation*}
where the set of indices $j'$ is some subset of the set of indices~$j$.

The conditions on $\gamma$ and $d$ given above must now be examined on a case-by-case basis:
\begin{itemize}\itemsep=0pt
\item[--] Either there \emph{does not} exist B\'ezout coefficients $\gamma$ satisfying $0 < \gamma < d$ with $\gamma$,$d$ coprime numbers. The same negative conclusion holds.

\item[--] Or such B\'ezout coefficients exist, and some ``cross-cancellations'' occur.
\end{itemize}
Consider some $\gamma$ with the required properties and denote by $\gamma'$ the other B\'ezout coefficient.
If $(\ell,\ell')$ are the B\'ezout coefficients of $\bar m$ and $\bar n$, $\ell \bar m + \ell' \bar n = 1$, the solution of $\gamma' = \bar a \bar m + \bar b \bar n$ is then given by
\begin{equation*}
\bar a = \gamma' \ell + k \bar n \qquad \text{and} \qquad \bar b = \gamma' \ell' - k \bar m ,
\end{equation*}
with $k\in\ZZ$. Finally, one obtains
\begin{equation*}
\frac{\lambda}{m} = \gamma' \ell + \frac{\gamma}{d} + k \frac{n}{g} \qquad \text{and} \qquad
\frac{\lambda^*}{n} = \gamma' \ell' + \frac{\gamma}{d} - k \frac{m}{g} .
\end{equation*}
This ends the proof of Lemma~\ref{lem:abel2}.

Remark that when $m$ and $n$ are coprimes ($g=1$), the relation $\gamma' g + \gamma(\bar s+\bar s')=1$ always admits a solution. In fact, it allows to eliminate $\gamma'$ and simplifies the expression of~$\lambda$.

\section{Technical proofs\label{sect:proofs}}

Having characterized the abelianity lines, we are now in a position to prove the results presented in Section~\ref{sect:main}.
\subsection{Proof of Theorem \ref{thm:line}}

We first consider the abelianity lines of type (a) in Theorem~\ref{thm:line} and the condition~(1) of Lemma~\ref{lem:abel}.
 If equation~\eqref{eq:thmb} holds, this is equivalent to impose $\lambda\in\ZZ$. Hence, thanks to Proposition~\ref{prop:ligne}, the intersection again defines a line of abelianity.

To show that in case (a) a line of abelianity of $\mathscr{S}_{m,n}$ is not necessarily
a line of abelianity of $\mathscr{S}_{m',n'}$, it is sufficient to exhibit an example.
Indeed, if one considers the intersection of the critical surfaces $\mathscr{S}_{3,6}$ and
$\mathscr{S}_{2,5}$, we don't get a line of abelianity of type~(b). However, it defines a line of abelianity of type (a) on $\mathscr{S}_{3,6}$, but not on $\mathscr{S}_{2,5}$.

It remains to analyse the case (b) in Theorem~\ref{thm:line}. Recalling the expressions \eqref{eq:lambda-inter}, one sees that
\eqref{eq:thma} is exactly the condition $\frac{a-b}d\in\ZZ$ of Lemma~\ref{lem:abel}, with in addition the condition $(m+n)(m'+n')\neq0$.
We set $\delta = \gcd(m+n,m'+n')$, $m+n=\delta u$, and $m'+n'=\delta v$. Equation~\eqref{eq:thma} leads to $u(1 + \alpha n') = v(1 + \alpha n)$, hence $1+\alpha n'=\xi v$, $1+\alpha n=\xi u$ where $\xi\in\ZZ$, since $u,v$ are coprime integers.
Similarly, $u(1 - \alpha m') = v(1 - \alpha m)$, hence $1-\alpha m'=\xi' v$, $1-\alpha m=\xi' u$ where $\xi'\in\ZZ$.
It follows that $\frac{m'-m}{m'n-mn'}=\frac{\xi'}{\delta}$ and $\frac{n-n'}{m'n-mn'}=\frac{\xi}{\delta}$.
Therefore, $d$ is necessarily a divisor of $\delta$, hence a divisor of $m+n$ and $m'+n'$. This implies that the second abelianity condition $\bar\mu=\bar\mu'$ is realized, see~\eqref{eq:caseI}.
Then, type~(b) in Theorem~\ref{thm:line} implies condition~(2) of Lemma~\ref{lem:abel}.

To consider the reciprocal implication, one has to deal with the specific situations where $m+n=0$
or $m'+n'=0$. But it has been shown (see after equation \eqref{eq:surf2}) that it can occur only when $m=-n=\pm1$ or $m'=-n'=\pm1$. This leads to the types (c) and (c$'$) in Theorem~\ref{thm:line}.
Then, we conclude that condition~(2) of Lemma~\ref{lem:abel}
is equivalent to cases (b), (c) and~(c$'$) of Theorem~\ref{thm:line}.

Obviously, since the condition corresponding to case (b) is symmetric in the ex\-chan\-ge
$(m,n)\allowbreak \leftrightarrow(m',n')$, the line of abelianity for $\mathscr{S}_{m,n}$ is also a line of
abelianity for $\mathscr{S}_{m',n'}$. Abelianity for the surface $\mathscr{S}_{1,-1}$ is automatic, see Proposition~\ref{prop:Scrit}.

\subsection{Proof of Theorem \ref{thm:surf}}

We first show that any line of abelianity can be constructed as an intersection. A line of abelianity is characterized by a rational value of $\lambda$ given the surface $\mathscr{S}_{m,n}$. But any rational $\lambda$ can be parametrized by formula~\eqref{eq:lambda-inter} for suitable values of $m'$ and $n'$. Indeed, if we choose $m'=(a+1)m+d$ and $n'=(a+1)n$, we get $a/d$ for the irreducible fraction of $\lambda/m$. It follows that an abelianity line can be identified with the intersection of two surfaces $\mathscr{S}_{m,n}$ and $\mathscr{S}_{m',n'}$.

Let $\mathscr{S}_{m,n}$ and $\mathscr{S}_{m',n'}$ be two intersecting surfaces, hence the parameters $p$ and $p^*$ are given by \eqref{eq:ppstar}. We recall that this equation leads to~\eqref{eq:lambda-inter} through the parametrization~\eqref{eq:paramppstar}.

Case (a) of Theorem~\ref{thm:line} corresponds to $\lambda \in \ZZ$.
Equation~\eqref{eq:lambda-inter} implies that $m'n\lambda+n'm\lambda^*=mn$.
But $\lambda$ and $\lambda^*=1-\lambda$ are coprime integers.
Let $\ell_0$ and $\ell'_0$ be their B\'ezout coefficients such that $\ell_0\lambda+\ell'_0\lambda^*=1$.
A solution for $(m',n')$ is then $m'=\ell_0m$ and $n'=\ell'_0n$.
The general expression for the B\'ezout coefficients of $(\lambda,\lambda^*)$ being $\ell=\ell_0+k\lambda^*$ and $\ell'=\ell'_0-k\lambda$, where $k\in\ZZ$, we obtain a countable number of possible pairs given by $m'=m(\ell_0+k\lambda^*)$ and $n'=n(\ell'_0-k\lambda)$.

Case (b) of Theorem \ref{thm:line} corresponds to the condition $\lambda/m - \lambda^*/n \in \ZZ$.
One sets $\lambda/m=a/d$ and $\lambda^*/n=b/d$, where $a/d$ and $b/d$ are irreducible fractions (see proof of Lemma~\ref{lem:abel}), and one looks for pairs $(m',n')$ such that~\eqref{eq:thma} holds.
Thanks to the change of variables $m'=m+m_0$, $n'=n+n_0$, the expression $\lambda/m=a/d$ and $\lambda^*/n=b/d$ lead to
$dm_0=b(m_0n-n_0m)$ and $dn_0=-a(m_0n-n_0m)$, hence $am_0+bn_0=0$, i.e., $m_0=-bu$ and $n_0=au$ where $u\in\QQ$.
We get a countable number of possible pairs $(m',n')$. They are given by $m'=m-bu$ and $n'=n+au$ where $u \in \ZZ/\gcd(a,b)$ since we are looking for integer solutions.

Cases (c) or (c$'$) of Theorem~\ref{thm:line} correspond here to the same discussion. The intersection of~$\mathscr{S}_{1,-1}$ with $\mathscr{S}_{m,n}$ leads to the following values
 \begin{equation*}
-p^{\frac12}=q^{-N\lambda} \qquad\text{with}\quad\lambda=\frac{n+1}{m+n}\qquad\text{and}\qquad c=-N.
\end{equation*}
We consider the intersection of the surface $\mathscr{S}_{m,n}$ with the surface $\mathscr{S}_{m',n'}$, where we choose $m'=m-u(m-1)$ and $n'=n-u(n+1)$, $u\in\ZZ$. It leads to the values
 \begin{equation*}
-p^{\frac12}=q^{-N\lambda/m} \qquad\text{with}\quad\frac{\lambda}m=\frac{n+1}{m+n}\qquad\text{and}\qquad c=-N.
\end{equation*}
Thus, it defines the same line of abelianity on $\mathscr{S}_{m,n}$. When $u$ varies in $\ZZ$, we get a countable number of surfaces that intersect on this line. Note that $u=0$ corresponds to $\mathscr{S}_{m,n}$, while $u=1$ leads to $\mathscr{S}_{1,-1}$.

\subsection{Proof of Propositions \ref{prop:Scrit}, \ref{prop:S0n} and \ref{prop:Smmm}}

\begin{proof}[Proof of Proposition \ref{prop:Scrit}]
The part (i) has already been proved in \cite{AFR19}. It is thus enough to prove (ii). The intersection of
$\mathscr{S}_{1,-1}$ with a generic surface $\mathscr{S}_{m',n'}$ leads to
\begin{equation*}
\lambda = \frac{m'(n'+1)}{m'+n'} \qquad \text{and} \qquad \lambda^* = \frac{n'(1-m')}{m'+n'}
\end{equation*}
for $(\lambda,\lambda^*)$ associated to the surface $\mathscr{S}_{m',n'}$.

It is easy to check that $\frac{\lambda}{m'}-\frac{\lambda^*}{n'}=1$, so that the first part of condition~(2) in Lemma~\ref{lem:abel} is satisfied for all values of~$m'$ and~$n'$. It implies also that the irreducible fractions corresponding to $\frac{\lambda}{m'}$ and $\frac{\lambda^*}{n'}$ have the same denominator $d$. Then, it remains to show that this denominator $d$ divides $m'+n'$.
Since $\frac{\lambda}{m'} = \frac{n'+1}{m'+n'}$, it is immediate.

Now, choosing $n'=a-1$ and $m'=d-a+1$, we get that $\frac{\lambda}{m'}=\frac{a}{d}$ with $m'+n'=d$. In that case, $\frac{\ln p}{2N\ln q} = \frac{a}{d}$: varying $a$ and $d$ in $\ZZ$, we get any rational number, so that the abelianity lines constructed in this way form a dense subset of the surface $\mathscr{S}_{1,-1}$.
\end{proof}

\begin{proof}[Proof of Proposition~\ref{prop:S0n}] The parametrization~\eqref{eq:ppstar} shows obviously that the surfaces~$\mathscr{S}_{0,n}$ have to be studied specifically. Indeed, from the surface condition~\eqref{eq:surf}, that reads now $\big({-}p^{*\frac{1}{2}}\big)^{n} = q^{-N}$, one obtains immediately
\begin{equation*}
\mathcal{Y}_{0,n}(x) = \frac{\displaystyle
\prod_{\ell=0}^{|n|-1} \mathcal{U}\big(\big({-}p^{*\frac{1}{2}}\big)^{\ell} x\big)}
{\displaystyle \prod_{\ell=1}^{|n|} \mathcal{U}\big(\big({-}p^{*\frac{1}{2}}\big)^{-\ell} x\big)} = 1.
\end{equation*}
Hence, the generators $t^{(k)}_{0,n}(z)$ satisfy an abelian algebra on the surface $\mathscr{S}_{0,n}$. This proves the part (i) of the proposition.

Let us now examine the intersection of the surface $\mathscr{S}_{0,n}$ with a generic surface $\mathscr{S}_{m',n'}$. The equations \eqref{eq:ppstar} imply the following expressions for $\lambda$ and $\lambda^*$ relative to the surface $\mathscr{S}_{m',n'}$, i.e., $-p^{\frac{1}{2}}=q^{-N\lambda/m'}$ and $-p^{*\frac{1}{2}}=q^{-N\lambda^*/n'}$:
\begin{equation*}
\lambda = 1-\frac{n'}{n} \qquad \text{and} \qquad \lambda^* = \frac{n'}{n} .
\end{equation*}
If $n' \in n\ZZ$, then $\lambda$ is an integer and one gets an abelianity line of type (a) in Theorem \ref{thm:line} for any value of $m'$. Setting $n'=kn$ ($k\in\ZZ$), one gets
$\frac{\ln p}{2N\ln q} = \frac{k-1}{m'}$: when $k$ and $m'$ run over $\ZZ$, one obtains any rational
number. Hence, the abelianity lines of $\mathscr{S}_{m',n'}$ identified as intersections with $\mathscr{S}_{0,n}$ form a dense set in the surface $\mathscr{S}_{0,n}$. This proves the part (ii).

Finally, choosing $n'=kn+1$ and $m'=(1-k)n$ with $k\in\ZZ$ shows that $\lambda=1-k-\frac1n\notin\ZZ$
while $\frac{\lambda}{m'}-\frac{\lambda^*}{n'}=\frac1{(1-k)n^2}\notin\ZZ$: the intersection is not a line of abelianity for
the surface $\mathscr{S}_{m',n'}$. Running $k$ over $\ZZ$, we get a countable number of such surfaces.
\end{proof}

\begin{proof}[Proof of Proposition~\ref{prop:Smmm}] We recall that, due to the surface condition, the central charge is fixed to the value $c=-N/m$, the other parameters $q$ and $p$ remaining unconstrained. By contrast to the critical case, since $m \ne 1$, the generator $t_{m,-m}(z)$ \emph{does not} commute with the generators of $\gelpn$, hence there is no extended center.

However, under certain supplementary conditions on $p$, one finds a ``localized'' extended center, i.e.m the generator $t_{m,-m}(z)$ commutes with those of $\gelpn$ on a certain submanifold of the surface $\mathscr{S}_{m,-m}$, see Corollary~3.2 in~\cite{AFR17}.
Such a submanifold will be called ``super-abelianity line'' for obvious reasons. In particular, it is
 also an abelianity line.
Let us now propose a characterization of these special lines.
Consider the exchange function \eqref{eq:exchtL} for $m+n=0$:
\begin{equation}
\prod_{k=1}^m \frac{\mathcal{U}\big(\big({-}p^{*\frac{1}{2}}\big)^{-k} x\big)}{\mathcal{U}\big(\big({-}p^{\frac{1}{2}}\big)^{-k} x\big)} \quad \text{for $m>0$} \qquad \text{and} \qquad \prod_{k=0}^{n-1} \frac{\mathcal{U}\big(\big({-}p^{\frac{1}{2}}\big)^{k} x\big)}{\mathcal{U}\big(\big({-}p^{*\frac{1}{2}}\big)^{k} x\big)} \quad \text{for $n>0$}.\label{eq:ratio}
\end{equation}
The ratio~\eqref{eq:ratio} is equal to 1 if each term indexed by $k$ in the numerator simplifies with the term indexed by $\sigma(k)$ in the denominator where $\sigma \in \mathfrak{S}_m$, up to a power of $q^N$ since the function $\mathcal{U}$ is $q^N$-periodic. We stick to the case $m>0$ (the case $m<0$ runs along similar lines), and we set $\cI_m = \{ 1, \dots, m \}$.
Using the parametrization~\eqref{eq:paramppstar}, one gets
\begin{equation*}
\lambda \big(k-\sigma(k) \big) = k - m\ell(k) \quad \text{where} \quad \ell(k) \in \ZZ.
\end{equation*}
One can deduce
\begin{equation}
m\big(\ell(k+1)-\ell(k)\big) + \lambda\big(1-\sigma(k+1)+\sigma(k)\big)=1\label{eq:(1)}
\end{equation}
with the boundary equation
\begin{equation}
m\ell(1) + \lambda\big(1-\sigma(1)\big)=1.\label{eq:(2)}
\end{equation}
Since no abelianity line of type (b) exists on $\mathscr{S}_{m,-m}$, one can restrict to $\lambda\in\ZZ$.
Equation~\eqref{eq:(2)} implies that $m$ and $\lambda$ have to be coprime integers. Let $(\beta_m,\beta_\lambda)$ be the corresponding B\'ezout coefficients, i.e., $\beta_m m - \beta_\lambda \lambda=1$. Their general expression is given by $\beta_m = \beta_0+\alpha\lambda$ and $\beta_\lambda = \beta'_0+\alpha m$ where $\alpha\in\ZZ$ and $(\beta_0,\beta'_0)$ is the representative such that $1 \le \beta'_0 \le m-1$. It follows then from~\eqref{eq:(1)} and~\eqref{eq:(2)}:
\begin{equation*}
\sigma(k+1)-\sigma(k)-1 = \beta_\lambda(k+1) = \beta'_0 + \alpha_{k+1}m
\end{equation*}
and
\begin{equation*}
\sigma(1)-1 = \beta_\lambda(1) = \beta'_0 + \alpha_{1}m.
\end{equation*}
Hence, one gets
\begin{equation}
\sigma(k) = k(1+\beta'_0) + m\sum_{i=1}^k \alpha_{i}.\label{eq:(3)}
\end{equation}
When $m$ is even, $\beta_\lambda$ has to be odd, hence $\beta'_0$ has to be odd: one generates only even values of~$\sigma(k)$. This leaves us only with odd values for $m$. Now, if $\gcd(m,\beta'_0+1) \ne 1$, all values of~$\sigma(k)$ are multiple of the gcd and one does not span the whole set $\cI_m$. Therefore, $m$ and $\beta'_0+1$ have to be coprime integers.
Finally, the condition $1 \le \beta'_0 \le m-1$ implies that the $\alpha_k$'s can always be chosen such that $\sigma(k) \in \cI_m$ for all $k$. Taking now $k,k' \in \cI_m$ with $k \neq k'$, one gets from~\eqref{eq:(3)}
\begin{gather}
\sigma(k') - \sigma(k) = (k'-k)(1+\beta'_0) + m\sum_{i=k+1}^{k'} \alpha_{i}.\label{eq:(4)}
\end{gather}
Since $1 \le |k'-k| \le m-1$ and $m$, $\beta'_0+1$ are coprime integers, the ratio $(k'-k)(1+\beta'_0)/m$ is never an integer and the r.h.s.\ of \eqref{eq:(4)} cannot vanish for any pair $(k,k')$. It follows that the $\sigma(k)$'s span the set $\cI_m$ when $k$ runs over~$\cI_m$.

It follows that super-abelianity lines are necessarily abelianity lines such as characterized by condition~(a) of Theorem~\ref{thm:line}, together with the further algorithmic conditions of primality established in the previous discussion.

Consider now the particular case of the intersection $\mathscr{S}_{m,-m} \cap \mathscr{S}_{1,1}$ with $m$ odd. For the parameters of $\mathscr{S}_{m,-m}$, one obtains $\lambda=\frac{m+1}{2}\in\ZZ$, which is coprime with $m$, since $2\lambda-m=1$.
The corresponding B\'ezout coefficients $\beta_0$ and $\beta'_0$
in $]0 , m[$ are given by $\beta_0=\frac{m-1}{2}$ and $\beta'_0=m-2$, hence $\beta'_0+1$ and $m$ are coprime integers. Therefore, this abelianity line of $\mathscr{S}_{m,-m}$ is a super-abelianity line.

Now as an intersection on $\mathscr{S}_{1,1}$, we get $\lambda=\frac{m+1}{2m}$ which is not an integer, so that the (possible) abelianity may only match condition~(2) of Lemma~\ref{lem:abel}. But $\lambda/1-\lambda^*/1=\frac{1}{m}$ is not an integer, so that the condition 2 is not fulfilled, and we don't have an abelianity line for $\mathscr{S}_{1,1}$.
\end{proof}

Inspired by this observation, one may now look for general values of $m$, $m'$, $n'$ such that the intersection $\mathscr{S}_{m,-m} \cap \mathscr{S}_{m',n'}$ leads to a super-abelianity line. However, given the expression of $\lambda$ above, this is clearly a purely algorithmic problem which goes beyond the scope of this paper.

Note that two surfaces $\mathscr{S}_{m,-m}$ and $\mathscr{S}_{n,-n}$ never intersect when $m\ne n$ (see Lemma~\ref{lem:det}).

\section{Poisson structures\label{sect:poisson}}

Having explicited the conditions under which the quadratic exchange structures in $\gelpn$ lead to abelian subalgebras, one can define Poisson structures on them. The explicit construction of these Poisson structures follows the standard scheme (see, e.g., \cite{AFR17, AFR19}). More precisely, on the surface $\mathscr{S}_{m,n}$, setting $p^{1-\epsilon}=q^{-\frac{2N\lambda}{m}}$ when one of the conditions on $\lambda$ of Lemma~\ref{lem:abel} is satisfied, one defines a Poisson structure by
\begin{equation*}
\big\{ t_{m,n}^{(k)}(z),t_{m,n}^{(k')}(w) \big\} = \lim_{\epsilon \to 0} \frac{1}{\epsilon} \big( t_{m,n}^{(k)}(z) t_{m,n}^{(k')}(w) - t_{m,n}^{(k')}(w) t_{m,n}^{(k)}(z) \big).%\label{eq:defpoisson}
\end{equation*}
We recall that we are eluding in \eqref{eq:exchtt} the subleading terms coming from the singularities in the Riemann--Hilbert splitting. Hence the Poisson bracket we obtain is a purely quadratic one. It corresponds to the leading term of the full Poisson structure that would be obtained from the complete achievement of the Riemann--Hilbert procedure.

\begin{Proposition}
On the line of abelianity, the Poisson structure is given by
\begin{equation*}
\big\{ t_{m,n}^{(k)}(z),t_{m,n}^{(k')}(w) \big\} = f^{(k,k')}(z/w) t_{m,n}^{(k)}(z) t_{m,n}^{(k')}(w),
\end{equation*}
where
\begin{equation*}
f^{(k,k')}(x) = \sum_{i=(1-k)/2}^{(k-1)/2} \sum_{j=(1-k')/2}^{(k'-1)/2} f\big(q^{i-j}x\big).
\end{equation*}
The explicit form of the function $f(x)$ depends on the type of line of abelianity
$($see classification Theorem~{\rm \ref{thm:line}} and notation~\eqref{eq:paramppstar}$)$.

For the type $(a)$ lines, we get
\begin{equation}\label{eq:f-b}
f(x) = -N\lambda(\ln q) x \frac{{\rm d}}{{\rm d}x} \left[ \frac{m}{\ell} \ln\cU_{q^{2N/\ell}}(x) + \frac{n}{\ell^*} \ln\cU_{q^{2N/\ell^*}}(x) \right],
\end{equation}
where $\ell=m/w$, $\ell^*=n/w^*$, and $w=\gcd(\lambda,m)$, $w^*=\gcd(\lambda^*,n)$, choosing for $w$ and $w^*$ the signs of $m$ and $n$ respectively.

For lines of type $(b)$, the function reads
\begin{gather}
f(x) = -N\lambda(\ln q) \frac{m+n}{d} x\frac{{\rm d}}{{\rm d}x} \left[ \rule{0pt}{20pt}\right.
 \left(1+\frac{\mu^2}{mn}\right) \ln\cU_{q^{2N/d}}(x) - \frac{d\mu}{mn} \ln\cU_{q^{2N}}\big(x\big)\nonumber \\
 \left. \hphantom{f(x) =}{} + \frac{d}{mn} \sum_{k=1}^{\mu-1} (k-\mu) \ln\Big(\cU_{q^{2N}}\big(\big({-}p^{\frac{1}{2}}\big)^{k}x\big)
\cU_{q^{2N}}\big(\big({-}p^{\frac{1}{2}}\big)^{-k}x\big) \Big) \right],\label{eq:f-a}
\end{gather}
where $d>0$ is the divisor of $m+n$ characterizing the abelianity line, defined in Lemma~{\rm \ref{lem:abel}}, condition~$(2)$, and $\mu$ is the remainder of the Euclidean division of~$m$ by~$d$ such that $0 < \mu < d$.

Here, we have introduced the function
\begin{gather}\label{eq:defUa}
\cU_a(x) = q^{\frac2N-2} \frac{\theta_{a}\big(q^2z^2\big) \theta_{a}\big(q^2z^{-2}\big)} {\theta_{a}\big(z^2\big)\theta_{a}\big(z^{-2}\big)} .
\end{gather}
\end{Proposition}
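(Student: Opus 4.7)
The plan is to read off the Poisson bracket from a first-order expansion of the exchange function in~\eqref{eq:exchtt} around the line of abelianity. Parametrising the flow off the line by $-p^{1/2}=q^{-N\lambda/(m(1-\epsilon))}$, so that $\epsilon=0$ sits on the line, I would expand
\begin{equation*}
\mathcal{Y}_{m,n}(x) = 1 + \epsilon\,f(x) + O(\epsilon^2), \qquad f(x) = \partial_\epsilon \log \mathcal{Y}_{m,n}(x)\big|_{\epsilon=0}.
\end{equation*}
Since on the line $\mathcal{Y}_{m,n}(x)=1$ the generators commute, the double product in~\eqref{eq:exchtt} becomes $1+\epsilon\sum_{i,j}f(q^{i-j}x)+O(\epsilon^2)$, and the defining limit for the Poisson bracket then yields the claimed structure with $f^{(k,k')}(x)=\sum_{i,j}f(q^{i-j}x)$ and right-hand side proportional to $t_{m,n}^{(k)}(z)\,t_{m,n}^{(k')}(w)$. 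The chain rule applied to $-p^{1/2}=q^{-N\lambda/(m(1-\epsilon))}$ (and its counterpart on $p^*$) produces the common overall prefactor $-N\lambda\ln q$. What remains is to evaluate explicitly the bracketed logarithmic derivative in each of the two cases.

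For type-(a) lines ($\lambda\in\ZZ$), I would reuse Step~1 of the proof of Lemma~\ref{lem:abel} applied to the numerator and denominator of~\eqref{eq:funcY}. Writing $\lambda/m=a/\ell$ in lowest terms with $\ell=m/w$ and $w=\gcd(\lambda,m)$, the $|m|$ arguments $q^{-N\lambda\ell'/m}$ ($\ell'=1,\dots,|m|$) collapse under $q^N$-periodicity into $|w|$ copies of the $\ell$ distinct residues. A product-to-theta identity of the schematic form
\begin{equation*}
\prod_{j=0}^{\ell-1}\theta_{q^{2N}}\!\big(q^{2Nj/\ell}\,z\big) \;\propto\; \theta_{q^{2N/\ell}}(z),
\end{equation*}
together with its $z\mapsto z^{-1}$ twin, rewrites the surviving product through the definition~\eqref{eq:defUa} as $\mathcal{U}_{q^{2N/\ell}}(x)^{w}$; a parallel treatment of the $p^*$-shifted factors produces $\mathcal{U}_{q^{2N/\ell^*}}(x)^{w^*}$. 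Differentiating the resulting logarithm and exchanging $\partial_\epsilon$ with $x\,d/dx$ then yields~\eqref{eq:f-b}, the inner coefficients $m/\ell=w$ and $n/\ell^*=w^*$ being precisely the multiplicities just counted.

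For type-(b) lines, Steps~1 and~2 of the proof of Lemma~\ref{lem:abel} reduce $\mathcal{Y}_{m,n}$ to a product indexed by $j\in\{1,\dots,\bar\mu\}$ in which the unique matching permutation is the identity. On the line every matched pair equals $1$, but the $\epsilon$-derivative picks up the infinitesimal mismatch between the $p$-shifted and the $p^*$-shifted $\mathcal{U}$ factor of each such pair; reorganising these contributions by the effective shift $k$ in the $(-p^{1/2})$-scale, three kinds of terms emerge: a diagonal $\ln\mathcal{U}_{q^{2N/d}}(x)$ arising from the $d$-periodic collapse of residues, a $\ln\mathcal{U}_{q^{2N}}(x)$ term coming from the $\mu$ fully matched positions, and a triangular sum $\sum_{k=1}^{\mu-1}(k-\mu)\ln(\cdots)$ whose weight $\mu-k$ is the count of pairs at shift $\pm k$. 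The rational coefficients $(m+n)/d$, $1+\mu^2/(mn)$ and $d\mu/(mn)$ then follow by combining these multiplicities with the B\'ezout data of Lemma~\ref{lem:abel2}, producing~\eqref{eq:f-a}. The main obstacle will be the bookkeeping of surviving factors in case~(b); the explicit derivation of the $(k-\mu)$ weight naturally calls for an Abel summation by parts on the pair-shift count, or equivalently a change of summation variable that unfolds the cancellations of Step~2.
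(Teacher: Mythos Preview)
Your overall framework---expand $\mathcal{Y}_{m,n}$ to first order in $\epsilon$ off the abelianity line, read $f(x)=\partial_\epsilon\log\mathcal{Y}_{m,n}|_{\epsilon=0}$, and propagate the double product to $f^{(k,k')}$---is exactly the paper's scheme. The difference lies in how $f(x)$ is actually computed. The paper does not use your theta-product identity; it expands each $\mathcal{U}$ factor through the infinite Pochhammer products defining $\theta_{q^{2N}}$, differentiates term by term (the ``direct but lengthy calculation''), and obtains the intermediate series $\mathcal{I}(x)$ of the form $\sum_{s\ge 0}\frac{x^2 q^{2Ns/\ell}}{1-x^2 q^{2Ns/\ell}}-\cdots$. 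Only at the very end are these recognised as $-x\frac{d}{dx}\ln\theta_a$ and repackaged into $\mathcal{U}_a$. Your route via $\prod_{j=0}^{\ell-1}\theta_{q^{2N}}(q^{2Nj/\ell}z)=\theta_{q^{2N/\ell}}(z)$ is a legitimate shortcut for identifying which elliptic building blocks must appear, and is arguably cleaner than the paper's brute-force resummation.

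There is, however, a genuine gap in your case~(a) argument. The step ``exchanging $\partial_\epsilon$ with $x\,d/dx$'' is not an innocent swap: with $-p^{1/2}=q^{-N\lambda/(m(1-\epsilon))}$ the chain rule on $\mathcal{U}\big((-p^{1/2})^{\ell'}x\big)$ produces a factor proportional to $\ell'$ in front of the logarithmic derivative, so different terms in the product carry different weights. Your product-to-theta identity applies only at $\epsilon=0$, where the arguments are exactly equally spaced, and therefore collapses the \emph{values} but not these weights. Getting from the weighted sum $\sum_{\ell'}\ell'\cdot(\text{log-derivative at residue }a\ell'/\ell)$ to the uniform coefficient $m/\ell=w$ in~\eqref{eq:f-b} requires precisely the residue-counting that the paper's series computation does implicitly (each residue class mod~$\ell$ is hit $|w|$ times as $\ell'$ runs, and the $\ell'$-weights and the numerator/denominator signs combine nontrivially). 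You would need to carry this out, e.g.\ by writing $\ell'=r+j\ell$ with $0\le r<\ell$, summing the arithmetic progression in $j$, and checking the boundary term $y_b$. The same issue is more acute in case~(b): your sketch correctly anticipates the three species of terms, but the rational coefficients $1+\mu^2/(mn)$ and $d\mu/(mn)$ come from summing these $\ell'$-weights against the matching pattern, not merely from counting multiplicities, and the Abel-summation idea you mention does not by itself produce them.
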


\begin{proof} It follows from the results of \cite{AFR19} that the Poisson structure is generically given by
the function $f(x)$ itself corresponding to the Poisson structure related to the abelian DVA-like subalgebra in $\gelpn$ generated by $t_{m,n}^{(1)}(z)$. Hence, it is sufficient to study this latter case,
according to the discussion on the exchange structure function $\mathcal{Y}_{m,n}(x)$, see end of Section~\ref{sect:Wpq}.

Given the expression of the exchange function $\mathcal{Y}_{m,n}(x)$, see~\eqref{eq:funcY}, the general structure of~$f(x)$ reads
\begin{equation*}
f(x) = \frac{{\rm d}}{{\rm d}\epsilon} \left( -\sum_{k=1}^{|m|-1} \ln y_k(x) + \sum_{k=1}^{|n|-1} \ln y^*_k(x) - \ln y_b(x) \right),
\end{equation*}
where
\begin{equation*}
y_k(x) = \frac{\mathcal{U}\big(\big({-}p^{\frac{1}{2}}\big)^{-k} x\big)}{\mathcal{U}\big(\big({-}p^{\frac{1}{2}}\big)^{k} x\big)} , \qquad y^*_k(x) = y_k(x)\vert_{p \to p^*} , \qquad
y_b(x) = \frac{\mathcal{U}\big(\big({-}p^{\frac{1}{2}}\big)^{-|m|} x\big)}{\mathcal{U}\big(\big({-}p^{*\frac{1}{2}}\big)^{-|n|} x\big)}.
\end{equation*}
In each case, the explicit form of the Poisson structure is given by a direct (but lengthy) calculation of the derivative, using the definition of the short Jacobi~$\theta$ function as absolute convergent products for $|q|<1$.

\textbf{Case (a): $\boldsymbol{\lambda \in \ZZ}$.} The function $f(x)$ is given by
\begin{equation*}
f(x) = -2N\lambda(\ln q) \big( 2\mathcal{I}(x) - \mathcal{I}(qx) - \mathcal{I}\big(q^{-1}x\big)\big),
\end{equation*}
where
\begin{gather}
\mathcal{I}(x) = \frac{m}{\ell} \left( \sum_{s=0}^{\infty} \frac{x^2q^{2Ns/\ell}}{1-x^2q^{2Ns/\ell}} - \sum_{s=1}^{\infty} \frac{x^{-2}q^{2Ns/\ell}}{1-x^{-2}q^{2Ns/\ell}} \right) \nonumber \\
\hphantom{\mathcal{I}(x) =}{} + \frac{n}{\ell^*} \left( \sum_{s=0}^{\infty} \frac{x^2q^{2Ns/\ell^*}}{1-x^2q^{2Ns/\ell^*}} - \sum_{s=1}^{\infty} \frac{x^{-2}q^{2Ns/\ell^*}}{1-x^{-2}q^{2Ns/\ell^*}} \right).\label{eq:poissonb}
\end{gather}
From the definition of $w$ and $w^*$, $\ell$ and $\ell^*$ are identified with the denominators of the reduced form of the rationals $\lambda/m$ (resp.~$\lambda^*/n$), themselves identified up to $2N$ with the ratio $\ln p/\ln q$ and $\ln p^*/\ln q$.

\textbf{Case (b): $\boldsymbol{\lambda/m-\lambda^*/n \in \ZZ}$.} The function $f(x)$ is given by
\begin{equation*}
f(x) = -2N\lambda(\ln q) \frac{m+n}{d} \big( 2\mathcal{I}(x) - \mathcal{I}(qx) - \mathcal{I}\big(q^{-1}x\big) \big),
\end{equation*}
where
\begin{gather}
\mathcal{I}(x) = \left(1+\frac{\mu^2}{mn}\right) \left( \sum_{s=0}^{\infty} \frac{x^2q^{2Ns/d}}{1-x^2q^{2Ns/d}} - \sum_{s=1}^{\infty} \frac{x^{-2}q^{2Ns/d}}{1-x^{-2}q^{2Ns/d}} \right) \nonumber \\
\hphantom{\mathcal{I}(x) =}{} + \frac{d\mu}{mn} \left( \sum_{s=0}^{\infty} \frac{x^2q^{2Ns}}{1-x^2q^{2Ns}} - \sum_{s=1}^{\infty} \frac{x^{-2}q^{2Ns}}{1-x^{-2}q^{2Ns}} \right)
+ \frac{d}{mn} \sum_{k=0}^{\mu-1} (k-\mu) \left( \sum_{s=0}^{\infty} \frac{x^2p^{k}q^{2Ns}}{1-x^2p^{k}q^{2Ns}} \right. \nonumber \\
 \left.\hphantom{\mathcal{I}(x) =}{} + \sum_{s=1}^{\infty} \frac{x^2p^{-k}q^{2Ns}}{1-x^2p^{-k}q^{2Ns}} - \sum_{s=0}^{\infty} \frac{x^{-2}p^{k}q^{2Ns}}{1-x^{-2}p^{k}q^{2Ns}} - \sum_{s=1}^{\infty} \frac{x^{-2}p^{-k}q^{2Ns}}{1-x^{-2}p^{-k}q^{2Ns}} \right) .\label{eq:poissona}
\end{gather}

It can be verified that the formulae \eqref{eq:poissonb} and \eqref{eq:poissona} remain valid when $|m|=1$ or $|n|=1$. Case (b) then only occurs if $n$ or $m$ are such that $\mu=1$ or $\mu=d-1$. In case~(a), note that $w=m$ when $|m|=1$ and $w^*=n$ when $|n|=1$.

Finally, the function $f(x)$ can be rewritten in a more compact form as the logarithmic derivative with respect to $x$ of the function $\cU_a$ with parameters $a=q^{2N/d}$, $a=q^{2N/\ell}$ or $a=q^{2N/\ell^*}$.
Indeed, considering the short Jacobi $\theta$ function with elliptic nome $a$, one has
\begin{equation*}
-x \frac{{\rm d}}{{\rm d}x} \ln\theta_a(x) = \sum_{s=0}^{\infty} \frac{xa^s}{1-xa^s} - \sum_{s=1}^{\infty} \frac{x^{-1}a^s}{1-x^{-1}a^s} .
\end{equation*}
Introducing the function $\cU_a(x)$ defined in \eqref{eq:defUa}, one gets the expressions~\eqref{eq:f-a} and~\eqref{eq:f-b}.
\end{proof}

It is important to point out the overlap of the abelianity conditions between the two cases. It occurs when the rest $\mu$ becomes zero in case~(b), and when the reduced denominators of~$\lambda/m$ and~$\lambda^*/n$ coincide in case~(a), i.e., $\ell=\ell^*=d$. The structure functions given in both formulae coincide as it should be.

\section{Conclusion}\label{sect:conclu}

The results we have obtained on abelianity lines, their characterization as intersection of critical surfaces, and their associated Poisson structures, suggest some further lines of investigation. Let us propose a few such directions.

First, since we are dealing with intersections of critical surfaces, we have several types of $\cW^{(m,n)}_{pqc}(N)$ algebras defined simultaneously on these lines. Since the abelianity condition is not always symmetric, it is clear that these algebras cannot be always identical. However, we have proved that each intersection corresponds to a countable number of surfaces, and thus a countable number of $\cW^{(m,n)}_{pqc}(N)$ algebras. Then, it is likely that some of them may coincide, and an analysis on the number of truly different algebras on each line is certainly worth completing. In the same way, when the intersection defines an abelianity line for both surfaces, the corresponding algebras are obviously isomorphic, but it would be interesting to look at the realizations in~$\gelpn$, and see if the generators are indeed identical.

We have derived several sets of Poisson bracket structures, characterized as surface-dependent linear combinations of solely line-dependent elliptic functions. As is always the case~\cite{AFR17,AFR19}, this abstract derivation provides only the leading spin terms of the $q$-$W_N$ Poisson algebra, and their consistent quantizations along the critical surfaces. Only an explicit realization, e.g., by vertex operators will provide the lower spin and central extension terms (see also~\cite{AFRScentral} for a systematic resolution of coboundary conditions). It must be emphasized indeed that a number of realizations of DVA by $q$-bosonized vertex operators, derived from $U_q(\mathfrak{sl}(2))$ generators, have been proposed. The earliest ones were constructed as soon as the DVA algebra itself~\cite{SKAO}. Very recently some new constructions were achieved~\cite{BerGo}. The question here is to find the suitable deformation of free boson algebra yielding as leading order of the exchange structure our abstract DVA and more generally $q$-$W_N$ algebras. It is amusing to note that a realization of a distinct DVA algebra, conjectured in~\cite{JiShi} was given directly~\cite{Shi} in terms of VO of the elliptic quantum algebra $\gelp$ for some particular values of $\ln p/\ln q$. A curious connection thus arises again between elliptic quantum algebras and DVA.

The meaning of the integer conditions in Theorem \ref{thm:line} remains to be investigated. The equation of the critical surface is identified, in terms of coordinates $\ln p$, $\ln p^*$, $2N\ln q$, with a~condition of linearity with the directing vector $(m, n, 1)$ defining the orthogonal direction to the critical plane. This vector must be in fact understood as a projective object with suitable integer conditions; in particular it belongs to the subspace (or ``manifold chart'' if we were not dealing with integers) characterized by a non zero third component~$x_3$. The conditions in Lemma~\ref{lem:det} mean that two surfaces intersect iff the vector product of their respective directing vectors has three non-zero components. In particular, it belongs to the consistent chart $x_3 \neq 0$ of the projective 3d vector space. It remains to see whether a geometric interpretation along these lines may then exist for the abelianity conditions.

Still in the light of Theorem~\ref{thm:line}, it would be interesting to understand the algebraic structure occurring on intersection of surfaces, when the abelianity conditions are not fulfilled. Obviously, the Poisson structure introduced in the abelian case cannot be reproduced outside abelianity. However, a more general structure may arise that would generalize the notion of symplectic structure.
One could think for instance of a trace brackets structure, or a Poisson vertex algebra.

Finally, other deformations of $W$-algebras have been considered in the literature.
One such example was proposed in~\cite{KoKo} in relation to the algebra $U_{q,p}(\wh{\mathfrak{sl}}_N)$.
We addressed in \cite{AFRdyn} the related question of extended center\footnote{Remind that the existence of such an extended center was at the core of the original approach to $q$-deformed $W$-algebras~\cite{FR}.}
at the critical value of the central charge in the  algebra~$\mathcal{B}_{q,\lambda}\big(\widehat{\mathfrak{gl}}(2)_c\big)$.
The former algebra can be viewed as the tensor product of the latter by an Heisenberg algebra~\cite{KojiKo}.
In both cases, the resulting algebra was not dynamical.
It was then argued in~\cite{AFRdyn} that the $W$-algebras build in the $\mathcal{A}_{q,p}\big(\widehat{\mathfrak{gl}}(2)_{c}\big)$
and $\mathcal{B}_{q,\lambda}\big(\widehat{\mathfrak{gl}}(2)_c\big)$ cases should be related through the vertex-IRF correspondence.

Another example consists of double deformation of $W$-algebras.
The underlying algebraic structures are based on quiver algebras~\cite{KP} or toroidal algebras~\cite{Nieri}
and have been shown~\cite{KP2} to generalize the construction proposed in~\cite{FR98}.
It would be interesting to investigate how critical surfaces and abelianity conditions arise in the context of double deformation.

\subsection*{Acknowledgements}
Part of this work was done during the visit of J.A.\ to LAPTh, with financial support from the USMB grant AAP ASI-32.
We also wish to thank the referees for fruitful comments.

\pdfbookmark[1]{References}{ref}
\LastPageEnding

\end{document}